\title{Online Pen Testing%
\thanks{We would like to thank Ian Tullis, Petr Mitrichev, and the entire problem setting team of Google Code Jam 2020 for writing and preparing the problem titled \emph{Pen Testing}~\cite{Tullis20}, which inspired this work. We thank the anonymous reviewers for their comments that have helped improve this paper. This work was supported by NSF awards 1813049, 1704417 and 1804222, and DOE award DE-SC0019205.}
}
\date{}
    \author{Anonymous Authors}
    \author{Mingda Qiao}
    \author{Gregory Valiant}
    \affil{\texttt{\{mqiao,valiant\}@stanford.edu}}
    \affil{Stanford University}
\newcommand{\1}[1]{\mathbbm{1}\left[#1\right]} 
\newcommand{\D}{\mathcal{D}} 
\newcommand{\eps}{\epsilon}
\newcommand{\E}{\mathcal{E}} 
\newcommand{\Ex}[2]{\operatorname*{\mathbb{E}}_{#1}\left[#2\right]} 
\newcommand{\Exp}{\mathrm{Exp}}
\newcommand{\pr}[2]{\Pr_{#1}\left[#2\right]} 
\newcommand{\R}{\mathbb{R}} 
\newcommand{\rmd}{\mathrm{d}}
\newcommand{\Xmax}{X^{\textrm{max}}}
\newcommand{\Xmaxhat}{\widehat{a_{[1]}}}
\newtheorem{theorem}{Theorem}
\newtheorem{definition}{Definition}
\newtheorem{lemma}{Lemma}[section]
\newtheorem{fact}[lemma]{Fact}
\newtheorem{remark}[lemma]{Remark}
\begin{document}

\maketitle

\begin{abstract}
    We study a ``pen testing'' problem, in which we are given $n$ pens with unknown amounts of ink $X_1, X_2, \ldots, X_n$, and we want to choose a pen with the maximum amount of \emph{remaining} ink in it. The challenge is that we cannot access each $X_i$ directly; we only get to write with the $i$-th pen until either a certain amount of ink is used, or the pen runs out of ink. In both cases, this testing reduces the remaining ink in the pen and thus the utility of selecting it.
    
    Despite this significant lack of information, we show that it is possible to approximately maximize our utility up to an $O(\log n)$ factor. Formally, we consider two different setups: the ``prophet'' setting, in which each $X_i$ is independently drawn from some distribution $\D_i$, and the ``secretary'' setting, in which $(X_i)_{i=1}^n$ is a random permutation of arbitrary $a_1, a_2, \ldots, a_n$.
    We derive the optimal competitive ratios in both settings up to constant factors. Our algorithms are surprisingly robust: (1) In the prophet setting, we only require \emph{one} sample from each $\D_i$, rather than a full description of the distribution; (2) In the secretary setting, the algorithm also succeeds under an \emph{arbitrary} permutation, if an estimate of the maximum $a_i$ is given.
    
    Our techniques include a non-trivial online sampling scheme from a sequence with an unknown length, as well as the construction of a hard, non-uniform distribution over permutations. Both might be of independent interest. We also highlight some immediate open problems and discuss several directions for future research.
\end{abstract}

\section{Introduction}
Suppose that we have a few whiteboard pens to choose from for an upcoming  presentation. We want to maximize the amount of remaining ink in the pen we pick, measured in writing time. Naturally, before we make our decision, we write with each pen for a short while to check whether the ink has almost run out. We face a dilemma regarding how long each pen should be tested. If we use the pen for just five seconds, we could not distinguish whether it had ten seconds or twenty minutes of writing time at the beginning. At the other extreme, too long a test period may exhaust the ink in the pen, leaving us too little ink for the actual writing.

This toy problem models scenarios such as testing the service life of a flimsy spare part, and more generally,  other real-world decision-making in which obtaining information about each option inevitably reduces the utility of the option. For example, we want to invest in one of $n$ start-ups with unknown growth potentials. We could, of course, watch from the sidelines for a while, and see whether the total value of each company has grown to a certain point (e.g., twice its initial value). However, we would have a lower return since our investment in the company only starts at this point, and the 2x increase does not count towards our profit.

To our knowledge, this ``pen testing'' problem first appeared as a competitive programming problem, written by Ian Tullis and prepared by Petr Mitrichev, in Google Code Jam 2020~\cite{Tullis20}. They considered the case that $n = 15$ pens hold $0, 1, 2, \ldots, n-1$ units of ink respectively, but are presented to us after a random shuffling. We can test the pens in an arbitrary order, and possibly go back to a pen that we tested earlier for further testing, if this is deemed necessary. Finally, we are asked to choose two of the pens, and we win the game if the total units of remaining ink in them is at least $n$.

If we randomly pick two pens without any testing, our winning probability is clearly below $50\%$. Surprisingly, it was shown by~\cite{Tullis20} that at $n = 15$, a better strategy wins the game with a higher probability of $\approx 64.4\%$! This strategy is computed by dynamic programming, in which each state simply consists of all the information that we obtain from testing: the amount of ink that has been used from each pen, and whether each pen has run out or not. For larger $n$, however, this approach would necessarily result in an exponential runtime. Furthermore, in the general case that the amounts of ink in the pens are no longer a permutation of $(0, 1, 2, \ldots, n-1)$, it is difficult to analyze how the optimal solution computed by this dynamic programming scales asymptotically.

In this work, we formulate and study an online version of this pen testing problem, in which we select only one of the $n$ options, and both the testing and decision are subject to an additional temporal restriction---we must review the $n$ options in the given order. For each option, we are allowed to test it to some extent, during which the value of the option also decreases. Then, we need to make an irrevocable decision on whether to accept the option---once we accept, the entire game ends and we cannot explore the remaining options; once we reject an option, we can no longer go back to it if the later options appear less ideal.

This \emph{online pen testing} problem that we consider is closely related to the theory of optimal stopping, in which the player is often assumed to observe the value of each option directly. In the single-choice case that we focus on, two well-studied settings are the \emph{prophet inequality} and the \emph{secretary problem}. In the former, the values are assumed to be drawn independently from $n$ given distributions. In the secretary problem, the options can have arbitrary values but the $n$ options are assumed to arrive in a uniformly random order. We discuss the connection between our work and this literature in Section~\ref{sec:related}. In this paper, we study the online pen testing problem in settings similar to these two problems, and derive the optimal guarantee that the player can achieve under minimal assumptions on the option values.

Awerbuch, Azar, Fiat, and Leighton~\cite{AAFL96} studied a closely related and more general setting: A decision-maker may hold at most one of $n$ commodities on each day. At the end of the day, each commodity issues a dividend of either $0$ or $1$ to its holders. The goal is to achieve a total profit comparable to the dividend issued by the best commodity, by switching between the commodities as few times as possible. In Remark~\ref{remark:AAFL} we discuss how pen testing can be realized as a special case of this setting.  We discuss this connection further when describing our results (Section~\ref{sec:results}) and techniques (Section~\ref{sec:techniques}).

\subsection{Problem Setup}\label{sec:problem-setup}
We first define the online pen testing problem formally.

\begin{definition}[Online pen testing]
    A problem instance is specified by $X_1, X_2, \ldots, X_n \ge 0$. At each step $i \in [n]$, the player first tests $X_i$ and then makes a decision:
    \begin{itemize}
        \item \textbf{(Testing)} The player picks threshold $\theta_i \in [0, +\infty]$. If $X_i > \theta_i$, the test passes; the test fails if $X_i \le \theta_i$, in which case the player observes $X_i$. The remaining utility of option $i$ becomes $X'_i = \max\{X_i - \theta_i, 0\}$.
        
        \item \textbf{(Decision)} After seeing whether the test passes, the player either accepts or rejects the $i$-th option irrevocably. If the player accepts, the game ends and the player receives a score of $X'_i$.
    \end{itemize}
\end{definition}
\begin{remark}
    The player may pick threshold $\theta_i = +\infty$, in which case the player gets to observe $X_i$ at the cost of leaving a remaining utility of $X'_i = 0$.
\end{remark}

\begin{remark}
    Our definition allows a more general testing procedure, in which the player performs $t$ tests sequentially at chosen thresholds $\theta^{(1)}_i, \theta^{(2)}_i, \ldots, \theta^{(t)}_i \ge 0$. This is equivalent to running a single test at threshold $\theta_i = \theta^{(1)}_i + \theta^{(2)}_i + \cdots + \theta^{(t)}_i$.
\end{remark}

\begin{remark}\label{remark:AAFL}
The problem can be viewed as a special case of the setting studied by~\cite{AAFL96}, where no switching is allowed and the commodities issue their dividend sequentially. Assuming that $X_1, X_2, \ldots, X_n$ are all integers, the online pen testing problem corresponds to an instance where the $i$-th commodity issues a unit dividend on $X_i$ consecutive days starting from day number $1 + \sum_{j=1}^{i-1}X_j$, and zero dividend on each of the other days.
\end{remark}

This problem can be viewed as a variant of the well-studied optimal stopping problem in which information is both limited and costly. For each option $i$, we either: (1) receive a single bit of information (namely, that $X_i > \theta_i$ holds) at the cost of reducing the value of the option by $\theta_i$; or: (2) observe $X_i$ exactly when $X_i \le \theta_i$, at the cost of losing all the utility in option $i$.

Without any assumptions on $X_1, \ldots, X_n$, no non-trivial guarantee on the player's score can be made.%
\footnote{This is true even if the player can observe the value $X_i$ directly.} 
In this paper, we consider the following two setups: the ``prophet'' setting and the ``secretary'' setting, both of which make some distributional assumption on the instance. In the following, we formally define the settings and the notion of competitive ratio in each of them.

\begin{definition}[Prophet setting]
    The player is given information about distributions $\D_1, \D_2, \ldots, \D_n$ over $[0, +\infty)$, from which the $n$ values $X_1, X_2, \ldots, X_n$ are drawn independently.
    
    The player is $\alpha$-competitive if its expected score, over the randomness in the distributional information, the generation of $(X_i)_{i=1}^n$, and the player itself, is at least $\frac{1}{\alpha}\cdot\Ex{X \sim \D}{\max_{i \in [n]}X_i}$.
\end{definition}

\begin{remark}\label{remark:distribution}
Formally, each $\D_i$ is defined by a cumulative distribution function $F_i: \R \to [0, 1]$ that is non-decreasing, right-continuous, and satisfies $\lim_{x\to+\infty}F_i(x) = 1$ and $F_i(x) = 0$ for $x <0$. The resulting $\D_i$ satisfies $\pr{X_i \sim \D_i}{X_i \le x} = F_i(x)$. We assume that each $\D_i$ has a finite expectation, i.e., the integral $\int_0^{+\infty}[1 - F_i(x)]~\rmd x$ converges, which implies that the expected maximum, $\Ex{X \sim \D}{\max_{i \in [n]}X_i}$, is also finite.

We will sometimes assume for simplicity that each $\D_i$ is continuous, i.e., the corresponding $F_i(x)$ is continuous. In other words, $\D_i$ has no point masses. The general case can be handled using a simple reduction (see e.g.,~\cite{RWW20}). The continuity of $F_i$ guarantees that for any $\alpha \in (0, 1]$, we may define the $(1 - \alpha)$-quantile of $\D_i$ as the minimum number $\tau$ that satisfies $\pr{X_i \sim \D_i}{X_i > \tau} = 1 - F_i(\tau) = \alpha$.
\end{remark}

In the prophet setting, the values of different options are drawn independently from distributions $\D_1, \D_2, \ldots, \D_n$. At the beginning of the game, the player is given certain information about the distributions.%
\footnote{Without information about $(\D_i)_{i=1}^n$, this is as hard as the worst-case setting when every $\D_i$ is degenerate.}
We consider both the case that the player receives a complete description of $(\D_i)_{i=1}^n$, and the case where the player sees one sample $\hat X_i$ drawn from each $\D_i$. In the latter case, the observed sample $\hat X_i$ is independent from the actual value $X_i$, and the expected score of the player is defined over the randomness in $(\hat X_i)_{i=1}^n$ as well. Finally, the player's score is compared to that of an omniscient prophet that knows the realization of $X_1, \ldots, X_n$ and always picks the highest one.

\begin{definition}[Secretary setting]
    The player is given information about $a_1, a_2, \ldots, a_n \ge 0$. The $n$ values $X_1, X_2, \ldots, X_n$ are guaranteed to be a permutation, either uniformly random or arbitrary, of $a_1, a_2, \ldots, a_n$.
    
    The player is $\alpha$-competitive in the random order case if its expected score, over the randomness in both the permutation and the player itself, is at least $\frac{1}{\alpha}\cdot\max_{i\in[n]}a_i$. The player is $\alpha$-competitive in the arbitrary order case if for any permutation $(X_i)_{i=1}^n$ of $(a_i)_{i=1}^n$, the player's expected score, over the randomness in the player itself, is at least $\frac{1}{\alpha}\cdot\max_{i\in[n]}a_i$.
\end{definition}

Note that an $\alpha$-competitive player for the arbitrary order case is also $\alpha$-competitive under a random arrival order. We consider the following three forms of information provided to the player, in decreasing order of helpfulness: (1) \emph{full information}, the player is given $a_1, a_2, \ldots, a_n$; (2) \emph{optimum information}, the player is given $\max_{i\in[n]}a_i$; (3) \emph{no information}, the player is given nothing.

With full or optimum information, if the player could observe each $X_i$ directly, it would be easy to achieve a utiliy of $\max_{i \in [n]}a_i$---simply accept option $i$ only if $X_i$ is equal to this maximum. This is, however, not true for the online pen testing problem. For example, when $(a_1, a_2, \ldots, a_n) = (1, 2, \ldots, n)$, the player can only ensure that $X_i = n$ by setting the threshold $\theta_i$ to $n - 1$, but this would leave a remaining utility of merely $1$.

\subsection{Our Results}\label{sec:results}
We obtain the optimal competitive ratios (modulo constant factors) for online pen testing, under different variants of the prophet and secretary settings defined above. 

\paragraph{A simple lower bound.} The following example shows that even when the $X_i$'s are drawn independently from the same ``nice'' distribution, our score can still be an $\Omega(\log n)$ factor away from the optimal outcome.

\begin{fact}\label{fact:lower}
    Suppose that $X_1, X_2, \ldots, X_n$ are drawn independently from the exponential distribution with parameter $1$. The expected score of the player is at most $1$, while the maximum among $X_1, X_2, \ldots, X_n$ is $H_n \coloneqq \sum_{k=1}^{n}\frac{1}{k} = \Omega(\log n)$ in expectation.
\end{fact}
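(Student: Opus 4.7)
The plan is to exploit the memorylessness of the exponential distribution: after testing pen $i$ at any threshold $\theta_i$, the remaining value $X'_i$ is either $0$ (if the test fails) or distributed as $\Exp(1)$ (if the test passes), and in both cases its conditional expectation given the entire history is at most $1$. Hence whenever the player chooses to accept an option, the expected remaining utility it receives is at most $1$, and $\mathbb{E}[\SOL]\le 1$ follows.

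Concretely, I would fix an arbitrary (adaptive, possibly randomized) strategy and introduce the filtration $\mathcal{F}_i$ generated by the first $i$ thresholds, the corresponding test outcomes, and the player's internal randomness. Because $X_i$ is independent of $\mathcal{F}_{i-1}$ and the choice of $\theta_i$ is measurable with respect to $\mathcal{F}_{i-1}$ together with independent fresh randomness, conditioning on $\mathcal{F}_{i-1}$ and $\theta_i$ leaves the distribution of $X_i$ equal to $\Exp(1)$. Memorylessness then gives
\[
\mathbb{E}[X'_i \mid \mathcal{F}_i] \;\le\; 1,
\]
with value $0$ on the event that the test failed and exactly $1$ on the event that it passed. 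Letting $\tau \in [n] \cup \{\infty\}$ denote the random index at which the player accepts (with $\tau = \infty$ if no pen is ever accepted), the tower rule yields $\mathbb{E}[\SOL] = \mathbb{E}\bigl[X'_\tau \1{\tau \le n}\bigr] \le 1$.

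For the comparison, the maximum of $n$ i.i.d.\ $\Exp(1)$ variables has CDF $(1-e^{-t})^n$, so
\[
\mathbb{E}\left[\max_{i \in [n]} X_i\right] = \int_0^{+\infty}\!\left[1-(1-e^{-t})^n\right]\rmd t = \int_0^1 \frac{1-u^n}{1-u}\,\rmd u = H_n
\]
via the substitution $u = 1-e^{-t}$, which yields the claimed $\Omega(\log n)$ separation between the player's expected score and $\mathbb{E}[\max_i X_i]$.

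I expect the only subtlety to be the careful handling of the adaptive dependence between $\theta_i$ and the past, which the filtration setup above addresses cleanly; the key insight driving everything is that memorylessness forces the posterior on the remaining ink to be either trivial (a point mass at $0$) or identical to the prior, leaving no room for an adaptive strategy to extract more than one unit of expected utility from any individual pen.
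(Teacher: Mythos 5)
Your argument is correct, and the core of the upper-bound half is the same as the paper's: memorylessness of $\Exp(1)$ implies that accepting after a passed test at any threshold yields expected remaining utility exactly $1$. The paper states this in one line without formally addressing adaptivity; your filtration argument makes the same point precise, which is a modest improvement in rigor but not a different idea.

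Where you genuinely diverge is the computation of $\mathbb{E}[\max_i X_i]$. The paper (Appendix~A) expands $(1-e^{-u})^n$ by the binomial theorem, integrates term by term to get $\sum_{k=1}^n \frac{(-1)^{k+1}}{k}\binom{n}{k}$, and then manipulates this alternating sum via the hockey-stick identity $\binom{n}{k}=\sum_{j=0}^{n-1}\binom{j}{k-1}$ to recover $H_n$. Your route is shorter and more transparent: substitute $u = 1-e^{-t}$ so that $dt = du/(1-u)$, turning the integral into $\int_0^1 \frac{1-u^n}{1-u}\,\rmd u = \int_0^1\bigl(1+u+\cdots+u^{n-1}\bigr)\rmd u = H_n$. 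Both are valid; yours avoids the combinatorial identity entirely and would arguably have been the more natural thing to put in the appendix.
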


\begin{proof}[Proof of Fact~\ref{fact:lower}]
    Whenever the player accepts option $i$ after testing it at threshold $\theta_i$, the expected remaining utility is $\Ex{X \sim \D}{X|X > \theta_i} - \theta_i = 1$.
    The expected score of the player is thus at most $1$. The second claim follows from a straightforward calculation, which is deferred to Appendix~\ref{sec:omitted-prophet-iid}.
\end{proof}

Note that the instance above is a special case of the prophet setting, in which $\D_1, \D_2, \ldots, \D_n$ are the same distribution. Furthermore, the lower bound argument still goes through even in the ``offline'' setting of~\cite{Tullis20}, i.e., the player is allowed to: (1) test the pens in an arbitrary order; (2) come back and test some pen that has been tested; (3) accept any pen after gathering all the information.

Perhaps surprisingly, this $\Omega(\log n)$ lower bound is the only obstacle against a competitive algorithm: An $O(\log n)$-competitive algorithm exists in almost all the variants, even though the player is under an additional temporal restriction, and has far less information about $(X_i)_{i=1}^n$.

\paragraph{The prophet setting.} Our first result addresses the prophet setting, assuming that the player is given full descriptions of the distributions.

\begin{theorem}\label{thm:prophet}
    In the prophet setting, there is an algorithm that, given $\D_1, \D_2, \ldots, \D_n$, achieves a competitive ratio of $O(\log n)$.
\end{theorem}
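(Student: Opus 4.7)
The plan is to base the algorithm on the threshold $\tau := M^\star / \log n$, where $M^\star := \Ex{X \sim \D}{\max_i X_i}$. The starting point is the elementary decomposition
\[
  M^\star \le \tau + \Ex{X \sim \D}{\max_i (X_i - \tau)^+} \le \tau + \sum_{i=1}^n \Ex{X_i \sim \D_i}{(X_i - \tau)^+},
\]
which rearranges to $\sum_i g_i(\tau) \ge M^\star - \tau \ge M^\star/2$, where $g_i(\theta) := \Ex{X_i \sim \D_i}{(X_i - \theta)^+}$. So testing at the common threshold $\tau$ extracts, in aggregate, $\Omega(M^\star)$ worth of ``above-threshold'' mass across the $n$ pens.

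For each pen $i$, write $q_i := \pr{X_i \sim \D_i}{X_i > \tau}$ and $V_i := \Ex{X_i \sim \D_i}{X_i - \tau \mid X_i > \tau}$, so $g_i(\tau) = q_i V_i$. The algorithm I would use considers an $O(\log n)$ family of strategies $\{S_{V^\star}\}$ indexed by a target value $V^\star = 2^k V_0$ for $V_0 := M^\star/n^{O(1)}$ and $k = 0, 1, \ldots, O(\log n)$. Strategy $S_{V^\star}$ selects $A(V^\star) := \{i : V_i \ge V^\star\}$, tests each selected pen at $\tau$ in the given order, and accepts the first one that passes, never accepting a pen outside $A(V^\star)$. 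Since $(\D_i)_{i=1}^n$ is fully given, the player precomputes $\Ex{}{\text{score of }S_{V^\star}}$ for each dyadic $V^\star$ and commits to the best one.

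To show that the best of these strategies yields expected score $\Omega(M^\star/\log n)$, I would first discard pens with $V_i < V_0$ or $q_i < 1/n^{O(1)}$; the total $g$-mass discarded is $M^\star/n^{O(1)}$ and is thus negligible. Dyadic bucketing of the surviving pens by $V_i$ has only $O(\log n)$ non-empty buckets, so pigeonhole supplies some level $V^\star$ with $\sum_{i \in A(V^\star)} g_i(\tau) \ge \Omega(M^\star/\log n)$. A case split on $Q^\star := \sum_{i \in A(V^\star)} q_i$ completes the argument: if $Q^\star \le 1$, then $\prod_{i \in A(V^\star)}(1 - q_i) = \Omega(1)$, the $i$-th pen becomes the first-to-pass with probability $\Omega(q_i)$, and $\Ex{}{\text{score of }S_{V^\star}} \ge \Omega(\sum_{i \in A(V^\star)} q_i V_i) = \Omega(M^\star/\log n)$; if $Q^\star > 1$, then some pen passes with probability $\ge 1 - 1/e$ and any accepted pen has conditional expected remaining $\ge V^\star$, so the score is $\Omega(V^\star)$. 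The main obstacle is the ``saturated subcase'' where $Q^\star > 1$ but $V^\star$ itself is too small to give $\Omega(M^\star/\log n)$: here I would pass to the smallest dyadic $V^\dagger \ge V^\star$ with $Q^\dagger \le 1$, and verify that the residual mass $\sum_{i \in A(V^\dagger)} g_i(\tau)$ above $V^\dagger$ is still $\Omega(M^\star/\log n)$ by exploiting the mass accumulated across the saturated levels $V^\star \le V < V^\dagger$, reducing back to the first case at level $V^\dagger$.
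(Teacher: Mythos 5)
There is a genuine gap in the plan, and it lives exactly where you flagged it---the ``saturated subcase.'' The recovery step (pass to the smallest dyadic $V^\dagger$ with $Q^\dagger\le 1$ and show that the residual mass above $V^\dagger$ is $\Omega(M^\star/\log n)$) is not just unproved; it is false, because your entire family $\{S_{V^\star}\}$ only ever tests at the single fixed threshold $\tau=M^\star/\log n$. Concretely, take $N=\log^{10}n$ active pens (the remaining $n-N$ are zero) with a common law: with probability $p=1/\log^2 n$, $X_i$ is uniform on $[\tau+0.99W,\ \tau+1.01W]$, and otherwise $X_i$ is uniform on $[\tau,\tau+\eps]$ for a negligibly small $\eps$, with $W\approx M^\star$ chosen so that $\Ex{X\sim\D}{\Xmax}=M^\star$ (the number of ``hits'' is $\approx Np=\log^8 n$, so $\Xmax\approx \tau+1.01W$ w.h.p.). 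Then every active pen has $q_i=1$ and $V_i=\Ex{}{X_i-\tau\mid X_i>\tau}\approx W/\log^2n$, so there is a single non-empty dyadic level with $V^\star\approx M^\star/\log^2 n$, $Q^\star=N\gg 1$, and \emph{no pens at all} above the next level $V^\dagger=2V^\star$. Every strategy $S_{V^\star}$ either tests no pens, or tests all of them at $\tau$ and accepts the first passer with expected remaining value $V_1\approx M^\star/\log^2 n = o(M^\star/\log n)$. So the best of your strategies is $\omega(\log n)$-competitive on this instance.

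The missing idea is not a better bucketing but an \emph{adaptive threshold}. In the example, the right move is to test at $\theta\approx\tau+0.99W$ (a high quantile of each $\D_i$), which succeeds with probability $p$ per pen and, summed over $\log^{10}n$ pens, scores $\Theta(W)$ with $\Omega(1)$ probability. The quantity $V_i$ is blind to this because it averages over the entire conditional distribution above $\tau$, and when $\D_i\mid X_i>\tau$ is heavy-tailed, $V_i$ is far smaller than the useful upper quantile. This is precisely what the paper's Section~\ref{sec:prophet-general} handles: it anchors at the adaptive quantity $\tau_{1/2}$ (median of $\Xmax$), groups pens by $\alpha_i=\pr{}{X_i>\tau_{1/2}}$ rather than by conditional means, and---crucially---tests pen $i$ at a \emph{per-pen quantile} $\tau^{(i)}_\alpha$ for a randomized $\alpha$, invoking Lemma~\ref{lemma:cond-exp} to extract $\Omega(\tau^{(i)}_{\alpha/2}-\tau^{(i)}_\alpha)$ and telescoping across the dyadic $\alpha$'s to recover all of $\tau_{1/2}$. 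Your decomposition $M^\star\le\tau+\sum_i g_i(\tau)$ and your Case A / Case B analysis are sound as far as they go, but without the freedom to raise the test threshold on individual pens you cannot close Case C, and the argument does not prove the theorem.
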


In light of Fact~\ref{fact:lower}, the $O(\log n)$ competitive ratio is tight up to a constant factor. This positive result can be strengthened to an $O(\log n)$-competitive single-sample prophet inequality.

\begin{theorem}\label{thm:prophet-single-sample}
    In the prophet setting, there is an algorithm that, given samples $\hat X_1, \hat X_2, \ldots, \hat X_n$ independently drawn from $\D_1, \D_2, \ldots, \D_n$, achieves a competitive ratio of $O(\log n)$.
\end{theorem}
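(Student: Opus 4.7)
The plan is to adapt the algorithm of Theorem~\ref{thm:prophet} to the single-sample setting by using $\hat M := \max_{i \in [n]} \hat X_i$ as a proxy for the correct threshold scale. The key observation is that $\hat M$ has \emph{exactly} the same distribution as $\max_{i \in [n]} X_i$, since each $\hat X_i$ is an independent draw from $\D_i$. Hence $\hat M$ serves as an unbiased, equidistributed estimator of the scale at which the prophet's reward is realized, without any distributional knowledge.

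The proposed algorithm sets $\theta_i = \hat M / K$ for every $i$, where $K$ is a suitably chosen constant, and accepts the first option whose test passes. The analysis conditions on the value of $\hat M$. Since $\hat M$ and $\max_i X_i$ are identically distributed, with constant probability both lie in the same dyadic band $[v, 2v]$ for some $v$. Conditioned on this event, the algorithm sets $\theta_i \approx v / K$, which is within a constant factor of the threshold that Theorem~\ref{thm:prophet}'s algorithm would choose when the ``true'' scale is $v$. The single-threshold analysis that powers Theorem~\ref{thm:prophet} then yields expected score $\Omega(v / \log n) = \Omega(\max_i X_i / \log n)$ on this event, and averaging over the realization of $\max_i X_i$ gives an unconditional expected score of $\Omega(\Ex{X \sim \D}{\max_i X_i} / \log n)$, matching the target competitive ratio up to constants.

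The main obstacle is controlling events where $\hat M$ and $\max_i X_i$ fall in very different dyadic bands --- which can happen for heavy-tailed distributions where both quantities are highly variable. In such regimes, the tests can either fail outright (if $\theta_i$ exceeds every realized $X_i$), or leave little remaining value (if the first pass happens at an option whose $X_i$ is only slightly above $\theta_i$ even though $\max_j X_j$ is much larger). The crux of the argument is to show that these ``scale-mismatch'' events do not erode the competitive ratio: by the identical distributions of $\hat M$ and $\max_i X_i$ and the symmetry of their joint distribution (they are obtained from two independent runs of the same generative process), the contribution of each dyadic scale can be charged against a corresponding slice of $\mathbb{E}[\max_i X_i]$, and the resulting telescoping preserves the $O(\log n)$ guarantee inherited from Theorem~\ref{thm:prophet}. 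If a single constant $K$ turns out to be insufficient to cover a pathological tail, the fallback is to randomize $K$ over a dyadic grid, using the slack built into Theorem~\ref{thm:prophet}'s analysis to absorb this overhead.
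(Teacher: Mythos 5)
Your proposal has a genuine gap: the single-threshold rule $\theta_i = \hat M / K$, even with $K$ randomized over a dyadic grid, cannot achieve $O(\log n)$ in the general non-i.i.d.\ prophet setting. Consider the instance where each $\D_i$ is degenerate at value $i$. Then $\hat M = \max_i \hat X_i = n$ deterministically, and your algorithm accepts the first option whose value exceeds $n/K$, namely option $\lfloor n/K\rfloor + 1$, which leaves remaining utility at most $1$. This holds for every $K$, so randomizing over a dyadic grid of $K$ values still yields expected score $O(1)$ against an optimum of $n$ --- a competitive ratio of $\Omega(n)$. This is precisely the obstruction the paper flags in Section~\ref{sec:techniques}: a single, shared threshold (no matter how its scale is chosen) is hopeless outside the i.i.d.\ case, so there is no "single-threshold analysis that powers Theorem~\ref{thm:prophet}" to inherit in the sense you invoke.

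The deeper issue is that the algorithm of Theorem~\ref{thm:prophet} in the general case is not a single-threshold algorithm: it partitions the distributions into tail-based groups $G_j$, identifies a significant group $G_{j^*}$, and then tests each option $i \in G_{j^*}$ at its own, distribution-dependent quantile $\tau^{(i)}_\alpha$. Those quantiles (down to $\tau^{(i)}_{2^{-(j^*+1)}}$ with $j^*$ as large as $\Theta(\log n)$) cannot be recovered from one sample per distribution, so the plan of "adapting the algorithm of Theorem~\ref{thm:prophet}" by plugging in $\hat M$ as a scale estimate does not go through. The paper's actual route is quite different: it observes that once you condition on the event $\Xmaxhat \le \max_i X_i$ (which happens with constant probability because $\Xmaxhat$ and $\max_i X_i$ are i.i.d.), the instance looks exactly like an arbitrary-order secretary instance equipped with a lower-bound hint on the maximum, and then invokes Lemma~\ref{lemma:secretary-arbitrary-approx-opt}, whose engine is the bit-sampling game of Lemma~\ref{lemma:bit-sampling}. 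A second single-threshold algorithm at $\theta = \Xmaxhat$ handles the residual $\Ex{X \sim \D}{\Xmax} - \Xmaxhat$. Your write-up correctly identifies that $\hat M \overset{d}{=} \max_i X_i$, but does not supply the mechanism (threshold-bucketing plus online sampling from an unknown-length sequence) that makes the argument actually close; without it, the "charging against dyadic slices" step is not a proof, and the counterexample above shows it cannot be patched within a single-threshold framework.
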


\paragraph{The secretary setting.}
Our positive result for the secretary setting states that an $O(\log n)$ competitive ratio is achievable even if we are given no information about $(a_i)_{i=1}^n$ (under a random arrival order). Furthermore, if the maximum value $\max_{i \in [n]}a_i$ is given, an $O(\log n)$-competitive algorithm exists even if the arrival order is arbitrary. The $O(\log n)$ upper bound for the latter case also follows from a result of~\cite{AAFL96} and the reduction outlined in Remark~\ref{remark:AAFL}. Interestingly, we obtain this competitive ratio using a quite different approach; we compare these two algorithms in more detail in Section~\ref{sec:techniques}.

\begin{theorem}[Secretary setting, upper bounds; Theorem 2.3~of~\cite{AAFL96}]\label{thm:secretary-upper}
    In the secretary setting, an $O(\log n)$-competitive algorithm exists in the following two cases: (1) the order is random and the player is given no information; (2) the order is arbitrary and the player is given optimum information.
\end{theorem}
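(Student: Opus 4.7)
The plan is to handle the two parts of the theorem with separate algorithms that share a common theme: randomizing over a logarithmically spaced family of thresholds derived from (an estimate of) the maximum value.

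For Part~(2), with the maximum $M$ known and the arrival order adversarial, I would pick a level $k$ uniformly at random from $\{0,1,\dots,L\}$ with $L = \lceil\log_2 n\rceil$, set a threshold on a geometric scale (for instance $\theta_k = M(1-2^{-k})$), and test each arriving option at this single threshold $\theta_k$.  The acceptance rule needs more care than ``accept the first passer,'' because an adversary can place a decoy option with value $\theta_k+\varepsilon$ at every level $k$, tricking the naive rule into accepting near-zero utility at each level simultaneously.  To neutralize such adversaries, the algorithm's acceptance rule should randomize over which passer to commit to---either accepting each passer independently with probability $\Theta(1/\log n)$, or interleaving the threshold strategy with a ``random position, no test'' fallback---so that both concentrated value distributions (few high values, most zero) and spread-out distributions (many values near the threshold) yield utility $\Omega(M/\log n)$ in expectation.

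For Part~(1), with a uniformly random arrival order and no information, the algorithm is a two-phase procedure.  Phase~1 uses a carefully chosen subset of early options as samples by testing them at a very large threshold (effectively $\theta_i=+\infty$), causing the tests to fail and revealing each sampled value $a_i$ exactly while leaving zero remaining utility on those options.  Phase~2 uses the sample's maximum $\widehat M$ as a proxy for the true $M$ and applies the Part~(2)-style threshold rule with $\widehat M$ in place of $M$.  Because the arrival order is uniformly random, with constant probability the true maximum lies in the Phase~2 stream and with constant probability the sample maximum $\widehat M$ is within a constant factor of the true $M$; randomizing the threshold over $\Theta(\log n)$ geometric multiples of $\widehat M$ absorbs the remaining uncertainty in the ratio $\widehat M/M$.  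The key technical ingredient---the online sampling scheme advertised in the abstract---must decide which of the streaming options to spend on sampling without foreknowledge of how many relevant options are still to come, yet produce a sample whose maximum is representative of the stream's maximum.

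The main obstacle for Part~(2) is designing the acceptance rule to be simultaneously robust to both ``concentrated'' adversaries (one large value among many zeros, favoring aggressive acceptance) and ``decoy'' adversaries (many values just above $\theta_k$ at every level $k$, favoring skipping); balancing these via randomization is the heart of the analysis.  The main obstacle for Part~(1) is the online sampling scheme itself, which must trade off between sampling too few options (producing a poor estimate $\widehat M$) and sampling too many (destroying high-value options before Phase~2).  Once both components are in place, the $O(\log n)$ competitive ratio follows by summing expected utility contributions over the $\Theta(\log n)$ threshold levels and verifying that at least one term contributes weight $\Omega(M)$, after which uniform averaging over $k$ produces the claimed $\Omega(M/\log n)$ bound.
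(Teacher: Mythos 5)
Your high-level plan shares the paper's scaffolding---geometric thresholds derived from (an estimate of) the maximum, sample-then-test for Case~(1)---but both parts contain genuine gaps that prevent the argument from closing.

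\textbf{Case (2), arbitrary order.} You correctly identify that the naive ``accept the first passer'' rule is defeated by decoy adversaries. But the remedies you sketch do not recover an $O(\log n)$ competitive ratio. If you accept each passer independently with probability $\Theta(1/\log n)$, then on an instance with, say, one bad passer followed by one good passer, the good one is accepted with probability only $\Theta(1/\log n)$; composed with the $\Theta(1/\log n)$ chance of picking the right threshold level, you end up at $\Theta(1/\log^2 n)$, i.e., an $O(\log^2 n)$-competitive algorithm, not $O(\log n)$. (And a ``random position, no test'' fallback accepts a uniformly random $a_i$, whose expectation $\frac{1}{n}\sum_i a_i$ can be arbitrarily smaller than $a_{[1]}$.) The paper's key new idea, which you are missing, is a \emph{bit-sampling game}: given a stream of bits of unknown length $m$ with more than half being ``1'', the player can commit to an unseen bit that is a ``1'' with probability $\Omega(1)$, despite not knowing $m$. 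This is achieved by tracking $\Delta$ = (zeros so far) $-$ (ones so far) and committing to the next bit with probability $2^{-(\Delta+2)}$. Feeding ``good passer = 1, bad passer = 0'' into that strategy is exactly what collapses the extra $\log n$ loss to a constant and yields Case~(2).

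\textbf{Case (1), random order, no information.} Your statement that ``with constant probability the sample maximum $\widehat M$ is within a constant factor of the true $M$'' is simply false: take $a_{[1]} = n^{100}$, $a_{[2]} = 1$, and the rest zero. Then $\widehat M \in \{0,1\}$, never within a constant factor of $M = n^{100}$. The paper handles this precisely by \emph{not} relying on any ratio guarantee: with probability $\ge 1/4$ the second-largest lands in the sample phase and the largest lands afterward, giving $\widehat M = a_{[2]} \le a_{[1]}$; then the algorithm hedges between (i) the approximate-optimum threshold algorithm, which earns $\Omega(a_{[2]}/\log n)$, and (ii) the single-threshold rule at $\widehat M = a_{[2]}$, which earns $a_{[1]} - a_{[2]}$ exactly when the unique maximum passes. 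Averaging the two covers all regimes of $a_{[2]}/a_{[1]}$. Without that case split (your ``randomizing the threshold over $\Theta(\log n)$ geometric multiples of $\widehat M$'' does not substitute for it, because $a_{[1]}/a_{[2]}$ can be super-polynomial), the argument fails.
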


We prove a matching $\Omega(\log n)$ lower bound under settings that are even easier than those in Theorem~\ref{thm:secretary-upper}.

\begin{theorem}[Secretary setting, lower bounds]\label{thm:secretary-lower}
    In the secretary setting, any algorithm is at best $\Omega(\log n)$-competitive in the following two cases: (1) the order is random and the player is given optimum information; (2) the order is arbitrary and the player is given full information.
\end{theorem}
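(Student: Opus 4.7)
The plan is to handle the two cases separately via a common value sequence $a_j := \ln(n/j)$ for $j = 1, \ldots, n$, so that $\max_{j\in[n]} a_j = \ln n$. These values are the deterministic quantiles of the $\Exp(1)$ distribution and thus discretize the prophet-setting lower bound of Fact~\ref{fact:lower}. The target in both cases is to show that any algorithm achieves expected score $O(1)$, yielding competitive ratio $\Omega(\log n)$.

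For case~(1) (random order, optimum information), the core step is a memorylessness-style inequality: for every history $H_{i-1}$ at step $i$ and every threshold $\theta_i$ the algorithm chooses,
\[
\mathbb{E}[X_i - \theta_i \mid X_i > \theta_i,\, H_{i-1}] = O(1).
\]
Since the player only accrues utility upon accepting after a passing test, a tower-property computation then bounds $\mathbb{E}[\textrm{score}] \le O(1)\cdot\Pr[\textrm{some accept}] \le O(1)$. The inequality is easy in the ``no-history'' case where $X_i$ is uniform over $\{a_1,\ldots,a_n\}$: a direct calculation using $\tfrac{1}{K}\ln(K!) = \ln K - 1 + o(1)$ gives $\mathbb{E}[a_J - \theta \mid a_J > \theta] \approx 1$ for uniform $J \in [n]$ and any $\theta \in [0, \ln n)$, matching the true memorylessness of $\Exp(1)$. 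The main obstacle is extending this to conditional distributions under general histories, since fail-observations delete specific values from the pool while pass-observations constrain unobserved values to lie above prescribed thresholds. I would resolve this with an amortization argument showing that no adaptive strategy can enrich the remaining pool with large values without paying a compensating drop in the pass probability, so that the product $\Pr[X_i > \theta_i \mid H_{i-1}]\cdot\mathbb{E}[X_i - \theta_i \mid X_i > \theta_i, H_{i-1}]$ remains uniformly bounded across all adaptive trajectories.

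For case~(2) (arbitrary order, full information), I would apply Yao's minimax principle and exhibit a \emph{non-uniform} distribution $\mu$ over permutations such that every deterministic algorithm---even one armed with complete knowledge of $(a_j)$ and $\mu$---has expected score $O(1)$ under $\mu$. A uniform $\mu$ is potentially insufficient because a full-information algorithm can design threshold schedules finely tuned to the known value sequence; the non-uniformity of $\mu$ is engineered to neutralize this, for instance by placing the $k$-th largest value in a narrow $k$-dependent window so that tests at large thresholds early in the sequence almost surely waste the pen, while the analogous tests late in the sequence succeed with probability too small to aggregate. The analysis again reduces to a memorylessness-type inequality under the correlated distribution $\mu$, reusing the structure from case~(1). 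The main difficulty here is designing $\mu$ so that this correlated memorylessness still holds, and if that bound proves delicate a backup is a direct potential-function argument that tracks the total accessible utility $\sum_{j\in U} a_j$ across the remaining pool $U$ and shows it shrinks slowly relative to any achievable residue.
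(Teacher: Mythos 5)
Your proposal for Case~(1) has a fundamental flaw. You propose a \emph{fixed} value sequence $a_j = \ln(n/j)$ and argue over the random ordering alone. But when the multiset $\{a_1,\ldots,a_n\}$ is deterministic, ``optimum information'' and ``full information'' coincide from the player's perspective: a player who is told only $a_{[1]} = \ln n$ can reconstruct the entire multiset because the instance is fixed and known. Theorem~\ref{thm:secretary-improved} then shows that an $O\bigl(\frac{\log n}{\log\log n}\bigr)$-competitive algorithm exists for exactly this setting (random order, full information), so no fixed-sequence construction can yield a lower bound stronger than $\Omega\bigl(\frac{\log n}{\log\log n}\bigr)$. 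The ``gap'' phenomenon driving that upper bound---that from some suffix onward the unseen pool develops a super-constant interval devoid of values, detectable from full information and a random order---is precisely the failure mode your amortization argument would have to rule out, and it can't, because the phenomenon genuinely occurs. The paper sidesteps this by drawing $a_1,\ldots,a_n$ i.i.d.\ from a truncated exponential: the truncation makes $a_{[1]}$ deterministic with overwhelming probability (so optimum information is nearly useless), while the randomness in the realized multiset keeps the player away from full information, after which the problem reduces to the i.i.d.\ prophet-setting instance of Fact~\ref{fact:lower}.

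For Case~(2), your high-level plan---Yao's minimax with an engineered non-uniform distribution over permutations---matches the paper's strategy in spirit. However, your proposal stops at the point where the real work begins: you do not exhibit a distribution and verify the ``correlated memorylessness'' property, and the specific idea you float (windowing the $k$-th largest value) is different from what makes the paper's argument go through. The paper generates the permutation by repeatedly sampling from a geometric distribution and appending a value whenever its quota has not yet been exhausted; this construction makes the conditional law at step $i$, given the history, proportional to $2^{-j}$ over the values $j$ still active, which directly yields the exponential tail bound on the score (Lemma~\ref{lemma:secretary-lower}) via a risky-versus-safe case split plus a Chernoff bound. Without pinning down the distribution and proving that conditional structure, your Case~(2) is an outline, not a proof, and your backup potential-function argument would face the same difficulty.
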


In the easiest combination among all secretary settings---that $(X_i)_{i=1}^n$ is a random permutation of known values $(a_i)_{i=1}^n$, the competitive ratio is slightly improved to $\Theta\left(\frac{\log n}{\log\log n}\right)$.

\begin{theorem}\label{thm:secretary-improved}
    In the secretary setting with random order and full information, there is an $O\left(\frac{\log n}{\log\log n}\right)$-competitive algorithm. Furthermore, this is tight up to a constant factor.
\end{theorem}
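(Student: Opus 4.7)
The plan is to establish both the upper and lower bounds of $\Theta(\log n/\log\log n)$ via a multi-scale partition of the known values. Throughout, let $L = \lceil \log n/\log\log n\rceil$, assume $a_1 \ge a_2 \ge \cdots \ge a_n$, and write $M = a_1$.

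For the upper bound, my plan is to design a randomized algorithm that selects one of $L$ candidate threshold strategies uniformly at random. Exploiting full information, the thresholds will be chosen from families expressive enough to handle both ``concentrated'' profiles (few distinct large values) and ``spread'' profiles (many scales of values). One concrete candidate combines rank-based thresholds $\theta_\ell = a_{r_\ell}$ with $r_\ell = \lfloor(\log n)^\ell\rfloor$, together with value-based thresholds $\theta_k = M \cdot 2^{-k}$ for $k = 0, 1, \ldots, O(\log M)$. For each candidate, the algorithm tests every arriving option at the chosen threshold and accepts the first that passes; under random arrival, the first passing option is uniformly distributed over a determined subset of values, with a conditional expected remaining utility that depends on the value profile and the threshold. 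The key lemma is that for every $(a_i)$, at least one of the $O(L)$ candidates achieves expected remaining $\Omega(M/L)$, which yields competitive ratio $O(L)$ after averaging.

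For the lower bound, my plan is to construct a hard instance with $L$ geometrically-separated value scales. For $\ell \in \{1, \ldots, L\}$, include $N_\ell = (\log n)^{\ell-1}$ copies of value $V_\ell = M \cdot (\log n)^{1-\ell}$, padded with zeros to reach $n$ items. Each level then carries equal mass $N_\ell V_\ell = M$, so no level dominates, and the values span from $M/(\log n)^{L-1}$ to $M$. I claim that every algorithm achieves expected payoff at most $O(M/L)$. The argument will categorize each accept event by the level of the accepted value: for a value at level $\ell$ to be accepted, the test threshold must lie between $V_{\ell+1}$ and $V_\ell$, and the dominant (most populous) passing value is always at the bottom of the threshold's pass region, capping the remaining utility at $V_\ell(1 - 1/\log n)$ scaled by the probability of reaching that level; summing over the $L$ levels gives the $O(M/L)$ bound.

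The main obstacle I expect is ruling out adaptive strategies in the lower bound. A player with full information and access to failed-test observations can infer which specific small values have been depleted from the pool and potentially leverage this to target remaining high values. I plan to address this via a potential argument in the spirit of Fact~\ref{fact:lower}: track $\Phi$ as the expected optimal future payoff under any continuation from the current state, and show by induction on the arrival step that $\Phi$ stays $O(M/L)$ regardless of history. The key inductive step will exploit the geometric proportions: for each test at threshold $\theta \in [V_{\ell+1}, V_\ell)$, either the pass probability times remaining utility is $O(V_\ell /N_\ell \cdot V_\ell) = O(V_\ell^2/N_\ell)$, which telescopes into $O(M/L)$ across levels, or the test fails and removes a small-value option without altering $\Phi$ significantly. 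A secondary technical difficulty is unifying the candidate strategies in the upper bound across all value profiles; I expect this to require a case split on whether the profile is ``ratio-heavy'' or ``count-heavy,'' with each regime handled by a different family within the $O(L)$ candidates.
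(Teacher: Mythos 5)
Both halves of your proposal have gaps that make the argument fail, and in both cases the miss is at the heart of why the $\log\log n$ savings is nontrivial.

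\textbf{Upper bound.} Your $O(L)$ candidates are all single-threshold strategies: each one fixes a threshold $\theta$ before the game begins and accepts the first option that beats it. Any such mixture achieves, on every instance, no more than the best single threshold does. But on the geometric instance $(a_i)$ consisting of $2^{k-j}$ copies of value $j$ for $j=0,\dots,k$ (with $k\approx\log n$), \emph{every} single-threshold test at any $\theta$ has conditional expected remaining utility $O(1)$: conditional on passing, the first passer equals $\theta+1$ with probability $\approx 1/2$, $\theta+2$ with probability $\approx 1/4$, and so on, for a geometric expectation of $\Theta(1)$. So no fixed threshold—rank-based, value-based, or otherwise—can give the needed $\Omega(a_{[1]}/L) = \Omega(\log\log n)$ score on this instance, and your key lemma is false. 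The paper's algorithm is genuinely adaptive: it spends the early steps testing at $\theta_i = +\infty$ to reveal each value, tracks the multiset of unseen values (which is known up to permutation under full information), and waits until an interval $\bigl(\tfrac{j-1}{k}a_{[1]}, \tfrac{j}{k}a_{[1]}\bigr]$ is emptied out \emph{while} some value above $\tfrac{j}{k}a_{[1]}$ remains; at that moment it commits to threshold $\tfrac{j-1}{k}a_{[1]}$ and is guaranteed remaining utility $\ge a_{[1]}/k$. With $k=\Theta(\log n/\log\log n)$ buckets, the event ``some bucket's last-surviving element lies strictly above the bucket'' has probability $\ge 1-1/e$ by an AM--GM computation over the independent events $\E_1,\dots,\E_{k-1}$. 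Adaptivity, not a richer family of fixed thresholds, is what buys the $\log\log n$ factor.

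\textbf{Lower bound.} Your hard instance has consecutive levels separated by a \emph{multiplicative} $\log n$ factor, i.e.\ $V_\ell = M(\log n)^{1-\ell}$. This makes it easy, not hard: test everything at $\theta = V_2 = M/\log n$. The only value exceeding $V_2$ is $V_1 = M$, so the first passer is guaranteed to have $X_i = M$, and the remaining utility is $M - V_2 = M(1 - 1/\log n) = \Omega(M)$. The player is $O(1)$-competitive on your instance. The reason is that you made the \emph{value gaps} between levels huge relative to the maximum; what actually forces the player to lose $\Omega(\log n/\log\log n)$ is value gaps that are a small fraction ($1/k$) of the maximum, combined with geometric decay of the \emph{counts}. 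The paper's instance has $2^{k-j}$ copies of the \emph{integer} $j$, so any test at threshold $\theta\in\{0,\dots,k\}$ is likely to pass at value $\theta+1$ and leave only $1 = a_{[1]}/k$. The paper then proves (Lemma~\ref{lemma:secretary-improved-lower}) via an induction on the remaining options that for each fixed $\theta$ and $\Delta$, $\Pr[\text{score}\ge\Delta] \le 4\cdot 2^{-\Delta}$, and a union bound over $\theta$ plus summing over $\Delta$ gives expected score $O(\log k)$—this also directly handles your concern about adaptive strategies, replacing your sketched potential argument. If you want to repair your construction, replace $V_\ell = M(\log n)^{1-\ell}$ with equally spaced values $V_\ell = M(1 - (\ell-1)/k)$ or simply integers, keeping the geometric count decay.
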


We found this result particularly surprising: if each $a_i$ were drawn from an exponential distribution, the instance appears similar to the lower bound instance of Fact~\ref{fact:lower} and it may seem hard to achieve a super-constant improvement over this. The key insight is that, given the random ordering of the $a_i$'s, with good probability, there will be some $j$ such that the set $\{X_i: i > j\}$ contains a super-constant ``gap'' in the following sense: There exists an interval $[A,B]$ of length $B - A > \Omega(\log \log n)$ such that $\{X_i: i > j\} \cap [A,B] = \emptyset$, and for some $i > j$,  $X_i > B$.  Given this, a simple scheme can achieve a score of $B-A.$  The core of the proof of the $O\left(\frac{\log n}{\log \log n}\right)$ upper bound is showing that for any set of $a_i$'s, such a gap will exist with good probability over the random order.

We summarize the results for the secretary setting in Table~\ref{table:secretary}. The rows represents the arrival order of $a_1, a_2, \ldots, a_n$, and the columns represents the amount of information about $(a_i)_{i=1}^n$ that is provided to the player. The setting becomes harder (or remains equally hard) going from top to bottom and from left to right in the table.

\begin{table}[ht]
    \centering
    \caption{A summary of results for the secretary setting. The top-left cell follows from Theorem~\ref{thm:secretary-improved}. The bottom-right cell is folklore. The remaining four bounds follow from Theorems \ref{thm:secretary-upper}~and~\ref{thm:secretary-lower}.}
    \label{table:secretary}
    
    \vspace{11pt}
    
    \begin{tabular}{|c|c|c|c|}
    \hline &&&\\[-0.5em]
        & Full Information  & Optimum Information & No Information\\
    [-0.5em]&&&\\
    \hline &&&\\[-0.5em]
    Random Order & $\Theta\left(\frac{\log n}{\log\log n}\right)$ & $\Theta(\log n)$  & $\Theta(\log n)$\\
    [-0.5em]&&&\\
    \hline &&&\\[-0.5em]
    Arbitrary Order & $\Theta(\log n)$  & $\Theta(\log n)$  & $\Theta(n)$\\
    [-0.5em]&&&\\
    \hline
    \end{tabular}
\end{table}

\subsection{Proof Overview and Technical Highlights}\label{sec:techniques}
We sketch the proofs of all our results, and highlight a few technical difficulties that are tackled using new techniques that might be of independent interest.

\paragraph{The prophet setting, given the distributions.}
In the i.i.d.\ case that $\D_1 = \D_2 = \cdots = \D_n = \D$, we show that the following \emph{single-threshold algorithm} succeeds: (1) Pick $\theta \ge 0$ and test each option with the same $\theta_i = \theta$; (2) Accept the first option that passes the test.
In particular, we prove that the single-threshold algorithm at one of the thresholds among $\{\tau_1, \tau_{1/2}, \tau_{1/4}, \tau_{1/8}, \ldots, \tau_{1/n}\}$ is $O(\log n)$-competitive, where $\tau_\alpha$ is the $(1 - \alpha)$-quantile of $\D$.

In the general case, however, this single-threshold approach no longer works: If each $\D_i$ is the degenerate distribution at value $i$, any single-threshold algorithm gives a score of at most $1$, whereas the optimum is $n$. Nevertheless, in this problem instance, our knowledge of $(\D_i)_{i=1}^n$ should allow us to realize that option $n$ is the optimal one. Indeed, our algorithm for the general case uses the distributional knowledge to identify a set of ``valuable'' options and only test those options at carefully chosen thresholds (which depend on the individual distributions).


We remark that our analysis for the i.i.d.\ setting (especially the use of the exponential distribution in the lower bound) resembles part of the analysis in~\cite{FTTZ16}, albeit in the different context of money burning auction introduced by~\cite{HR08}.

\paragraph{The secretary setting, upper bounds.} When the arrival order is random and the optimum $a_{[1]} \coloneqq \max_{i \in [n]}a_i$ is known, the single-threshold approach again gives an $O(\log n)$-competitive algorithm. We randomly choose a threshold $\theta$ between $0$ and $a_{[1]}$. It is easy to prove that with probability  $\Omega(1)$, among all the options with value $> \theta$, more than half of them have values higher than $\theta + \Omega(a_{[1]} / \log n)$. The random arrival order then implies that the first option that passes the test at $\theta$ leaves an $\Omega(a_{[1]} / \log n)$ remaining value in expectation.

To prove Theorem~\ref{thm:secretary-upper}, we need to remove either the knowledge of $a_{[1]}$, or the assumption on the arrival order. The former can be done by estimating $a_{[1]}$ using a standard technique. To handle an arbitrary arrival order, however, turns out to be non-trivial. Intuitively, among the $n_{>\theta}$ options with value $> \theta$, we want to accept one of them uniformly at random, but this is difficult without knowing $n_{>\theta}$ in advance.

We define a ``bit sampling'' game that abstracts this challenge. 
\begin{quote} \textbf{Bit Sampling Game:} We observe an arbitrary sequence of $m$ bits, with the promise that strictly more than half of the bits are ``1''.   Crucially, \emph{we do not know $m$ in advance}. We see the bits one by one, and may choose to commit to the next unseen bit at any point. We win if our chosen bit is a ``$1$''.  Can we win with constant probability? Can we win with probability $\ge 1/2$?
\end{quote}


Natural approaches to the problem (e.g., by guessing the value of $m$) only win the game with probability $O(1/\log n)$, where $n$ is an upper bound on $m$. In Section~\ref{sec:bit-sampling}, we give a more intricate strategy that wins with probability $\Omega(1)$. This strategy then gives an algorithm for Case~(2) of Theorem~\ref{thm:secretary-upper}. 

Drucker~\cite{Dru13} studied a similar sampling problem, in which the bit sequence is infinite and the density of 1's is lower bounded asymptotically. (Formally, the average of the first $n$ bits has a limit inferior of $\ge 1 - \eps$ as $n \to +\infty$.) One main result of~\cite{Dru13} is a family of strategies that ``commit to a bit $1$'' with a probability arbitrarily close to $1 - \eps$. Despite the similarity between them, the two problems have different cruxes. Intuitively, the player in the setting of~\cite{Dru13} needs to wait patiently for the sequence to reach a ``high-density region'', whereas the player in the above game must commit more aggressively, in case that the sequence ends very early.

As mentioned earlier, the $O(\log n)$ upper bound for the arbitrary arrival order case also follows from Theorem~2.3 in~\cite{AAFL96}. When translated into the setting of online pen testing, their algorithm picks a randomized threshold for each option, and accepts the first option that passes the test. The thresholds are independently drawn from a flipped exponential distribution over $[0, a_{[1]}]$, i.e., the probability of picking a higher threshold is exponentially larger. In contrast, our approach uses the same, uniformly random threshold for all options, and then uses the ``bit sampling'' scheme to ensure the competitive ratio.

\paragraph{The secretary setting, lower bounds.} The proof for the first case of Theorem~\ref{thm:secretary-lower} (random order and optimum information) follows from a change-of-distribution argument that transforms a distribution over instances to another distribution that corresponds to the prophet setting. The proof for the other case (arbitrary order and full information) is relatively more difficult.

Recall that the prophet setting lower bound (Fact~\ref{fact:lower}) relies on the memoryless property of the exponential distribution. 
It is thus natural to consider a sequence $a_1, a_2, \ldots, a_n$ that contains $\approx n/2$ copies of $1$, $\approx n/4$ copies of $2$, $\approx n/8$ copies of $3$, $\ldots$, and exactly one occurrence of $\log_2 n$, since the uniform distribution over $\{a_1, a_2, \ldots, a_n\}$ is roughly a geometric distribution, which is also memoryless. Hence, if $(X_i)_{i=1}^n$ is a random permutation of $(a_i)_{i=1}^n$, no matter how the player tests the first option, the expected remaining value is at most $O(1)$. If the same were true for all the remaining options in the sequence, an $\Omega(\log n)$ lower bound would follow.

However, this argument does not work perfectly---the construction only gives a weaker lower bound of $\Omega\left(\frac{\log n}{\log\log n}\right)$ in Theorem~\ref{thm:secretary-improved}.%
\footnote{In fact, this is inevitable in light of the upper bound part of Theorem~\ref{thm:secretary-improved}.} Note that the player, given full information about $(a_i)_{i=1}^n$, knows the multiset of the unseen values $\{X_i, X_{i+1}, \ldots, X_n\}$ at any step $i$. If the uniform distribution over this set does not ``resemble a geometric'' for some $i$, the player might exploit this to achieve a super-constant score. 
As we prove in the upper bound part of Theorem~\ref{thm:secretary-improved}, under a random arrival order and regardless of the choice of $(a_i)_{i=1}^n$, this ``non-geometric'' property holds at some point $i$ with a decent probability, so the player can always shave a $\log\log n$ factor off the competitive ratio.

In Section~\ref{sec:non-uniform-order}, we prove the $\Omega(\log n)$ lower bound in Theorem~\ref{thm:secretary-lower}, Case (2) by constructing a more intricate distribution over permutations of essentially the same sequence $(a_i)_{i=1}^n$. This distribution ensures that w.h.p.\ \emph{every} suffix of the sequence $(X_i)_{i=1}^n$ resembles the geometric distribution, and thus the player can achieve an $O(1)$ score at best.

\paragraph{The single-sample prophet setting.}
Our proof of Theorem~\ref{thm:prophet} implies that, to be $O(\log n)$-competitive in the prophet setting, it suffices to know a few quantiles of $\D_1$ through $\D_n$. We might hope that the same algorithm can be implemented using the samples. However, this approach would not prove Theorem~\ref{thm:prophet-single-sample}, since the algorithm needs the $(1-1/n)$-quantile of each $\D_i$, which requires $\Omega(n)$ samples from each distribution to estimate.

Interestingly, our results for the secretary setting can be applied to prove the single-sample prophet inequality in Theorem~\ref{thm:prophet-single-sample}. Given the samples $\hat X_1, \hat X_2, \ldots, \hat X_n$, we use $\Xmaxhat \coloneqq \max_{i \in [n]}\hat X_i$ as an estimate for the maximum among the ``real values'' $X_1, \ldots, X_n$. The problem instance can then be viewed as the arbitrary-order, optimum-information case of the secretary setting, except that we only know a rough ``hint'' on the maximum. Fortunately, our algorithm for Theorem~\ref{thm:secretary-upper}, Case~(2) can handle this case as well.

\subsection{Related Work}\label{sec:related}
Our work is closely related to the vast literature on the prophet inequality introduced by Krengel, Sucheston and Garling~\cite{KS78} and the secretary problem that dates back to at least the work of Dynkin~\cite{Dyn63}. We refer the readers to a tutorial of~\cite{Gup17} for different solutions for these two problems.

\paragraph{Prophets and secretaries with costs.} Most closely related to this paper is the prior work on optimal stopping with observation costs. For prophet inequalities, Jones~\cite{Jon90} first considered a setting where the player has to pay a fixed cost of $c \ge 0$ to observe the value of each item. In other words, the net reward from accepting the $i$-th option is reduced to $X_i - ic$. \cite{Jon90} derived sharp bounds on the \emph{difference} between the score of the optimal player and that of a prophet who knows $X_1, \ldots, X_n$. A special case of this setting that the values are i.i.d.\ was subsequently studied by~\cite{SC92,Har96,Kos04}.

Bartoszy{\'n}ski and Govindarajulu~\cite{BG78} defined a variant of the secretary problem with ``interview costs''. Given constants $a, b, c_1, c_2, \ldots, c_n \ge 0$, the player pays a cost of $c_k$ if the $k$-th option is selected. Furthermore, a score of $a$ or $b$ is awarded, depending on whether the chosen option has the highest or second highest value. More recently, \cite{BDGI09} studied a similar secretary problem with discounts, in which the value of the $i$-th option is $d(i)\cdot X_i$, where $X_1, X_2, \ldots, X_n$ are arbitrary values that arrive in a random order, and $d(\cdot)$ is a given discount function.

Another related setting is the Pandora's Box problem first introduced by Weitzman~\cite{Wei79}. In this setting, the value $v_i$ of each option is independently drawn, and the player may choose to examine the $i$-th option (i.e., to learn $v_i$) at a posted cost of $c_i$. Recent work has studied several variants of Pandora's Box: multiple selection under a combinatorial constraint~\cite{Sin18}, with a more complex probing process specified by Markov chains~\cite{GJSS19}, with a correlated prior distribution~\cite{CGTTZ20}, or under additional restrictions on the order of probing~\cite{BFLL20}.

In all these previous settings, the player can still fully access the value of each option, and the observation cost mostly depends on the number of options that the player observes before accepting. In contrast, our model assumes a more restricted form of observation and a cost that is commensurate with the extent to which we observe each option. On the other hand, the player only needs to pay the cost for the option that it finally accepts.

\paragraph{Multiple selection under constraints.} While we focus on the single choice setting of optimal stopping, we remark that there has been a flurry of recent work in the TCS community on selecting multiple options, in either prophet or secretary setting, under certain combinatorial constraints~\cite{Kle05,BIK07,CL12,Sot13,JSZ13,DK14,Lac14,FSZ14,MTW16,Rub16,RS17,HN20,STV21,AL21,AKKG21}. In particular, for the \emph{matroid secretary problem} in which the player is required to choose an independent set from some given matroid with the $n$ options as the ground set, it remains a major open problem whether an $O(1)$-competitive algorithm exists.

\paragraph{Prophet inequalities from samples.} Whereas most prior work on prophet inequalities makes the arguably strong assumption on the full knowledge of the distributions, \cite{AKW14} explored the setting where the player only observes a few samples from each distribution, and proved several $O(1)$-competitive prophet inequalities that require only a single sample from each distribution, even for the multiple-selection case under several types of matroid constraints. For the single-choice i.i.d.\ case, \cite{CDFS19} gave an algorithm that achieves a competitive ratio of $\alpha + \eps$ with $O(n^2)$ samples for any constant $\eps > 0$, where $\alpha \approx \frac{1}{0.745}$ is the optimal competitive ratio when the distribution is known. \cite{RWW20} further improved the sample complexity to $O(n)$ for the i.i.d.\ case, and also gave a $2$-competitive single-sample prophet inequality for the non-i.i.d.\ case. The competitive ratio of $2$ matches the case that the distributions are given.

\subsection{Organization of the Paper}
We start by highlighting several natural open problems and discussing future directions of research in Section~\ref{sec:discussion}. In Section~\ref{sec:prophet-iid}, we present a simple $O(\log n)$-competitive algorithm for the prophet setting under an additional i.i.d.\ assumption. This is extended to the general non-i.i.d.\ case (Theorem~\ref{thm:prophet}) in Section~\ref{sec:prophet-general}. Algorithms and lower bound constructions for the secretary setting (Theorems \ref{thm:secretary-upper}~through~\ref{thm:secretary-improved}) are presented in Sections \ref{sec:secretary-upper}~and~\ref{sec:secretary-lower}. In Section~\ref{sec:prophet-single-sample}, we show how the results for the secretary setting imply the single-sample prophet inequality in Theorem~\ref{thm:prophet-single-sample}.

\section{Discussion and Open Problems}\label{sec:discussion}
We mention a few immediate open problems and directions for future work.

\paragraph{Optimal constants.} The main focus of this work is the order of the competitive ratio, so we will prioritize the clarity of the algorithms/hard instances over optimizing the constant factors. (That said, all the hidden constants in our results will be reasonably small and easy to keep track of.) The most immediate open problem is to pin down the optimal constant factors in the competitive ratios as $n$ grows. Even in the i.i.d.\ case of the prophet setting, a multiplicative gap of $e$ still exists: The lower bound from Fact~\ref{fact:lower} scales as $H_n = (1 + o(1))\ln n$, whereas in Appendix~\ref{sec:constants}, we show that the upper bound from Theorem~\ref{thm:prophet} can be refined to $(e + o(1))\ln n$.

We note that in the usual setup of single-choice prophet inequalities, the optimal competitive ratio of $2$ for the non-i.i.d.\ case was obtained in the fundamental work of Krengel, Sucheston and Garling~\cite{KS78}, while the i.i.d.\ case turned out to be more challenging, and was solved only very recently by~\cite{CFH+17}.

\paragraph{Multiple selection and matroids.} A natural extension of the current single-choice setting is to select several pens, while maximizing the total remaining ink in them. The selection is subject to a cardinality constraint or, more generally, an arbitrary combinatorial constraint. Following the seminal work of~\cite{BIK07,BIKK18}, there has been a flurry of recent work on matroid secretary problem, in which the combinatorial structure is a matroid, and it remains an open problem to bound the optimal competitive ratio in terms of the matroid rank (the current best bounds are $\Omega(1)$ and $O(\log\log\mathrm{rank})$).

For many special types of matroids, $O(1)$-competitive algorithms are known. In particular, as noted in~\cite{BDGI09}, many of such algorithms proceed by reducing a more general matroid to a partition matroid, and the problem essentially becomes multiple instances of the single-choice problem. Therefore, any matroid class that has such a partition property also admits an $O(\log n)$-competitive algorithm in the pen testing variant. It remains an interesting question whether a similar $O(\log n)$ competitive ratio can be achieved in online pen testing over other natural matroid types. Conversely, does the lack of information in pen testing make it easier to prove lower bounds?

\paragraph{More general cost functions.} In our setting, testing an option with threshold $\theta$ reduces its utility by $\theta$. We could consider a slightly more general setting where the remaining utility of option $i$ becomes $X_i - c(\theta_i)$ for some given cost function $c(\cdot)$. For example, in the special case that $c(\theta_i) = \alpha \theta_i$ for some $\alpha > 0$, natural extensions of our current algorithmic techniques can still be applied.

\paragraph{A smoothed feedback.} Currently, the feedback from a test is binary---either ``pass'' or ``fail''. In the whiteboard pen example that motivates this work, it is also realistic to assume that we can tell when the ink ``starts to run out''. It would be interesting to formulate such a setting and explore whether this smoothed feedback makes stronger guarantees possible.

\section{Warmup: The IID Prophet Setting}\label{sec:prophet-iid}
We start with a special case of the prophet setting that $\D_1, \D_2, \ldots, \D_n$ are the same distribution (denoted by $\D$), a full description of which is given to the player. For simplicity, we assume that $\D$ is a continuous probability distribution. By Remark~\ref{remark:distribution}, for $\alpha \in (0, 1]$, we can define $\tau_\alpha$ as the smallest $(1-\alpha)$-quantile of $\D$, i.e., the minimum $\tau$ such that $\pr{X \sim \D}{X > \tau} = \alpha$.

In the following, $X_1, \ldots, X_n$ are independent random variables that follow distribution $\D$, and $\Xmax \coloneqq \max\{X_1,\ldots,X_n\}$ is defined as their maximum. We start by upper bounding the expected optimum, $\Ex{X_1,\ldots,X_n\sim\D}{\Xmax}$:

\begin{lemma}\label{lemma:prophet-optimum}
    For any distribution $\D$ and $\theta \in \R$,
    \[\Ex{X_1,\ldots,X_n\sim\D}{\Xmax} \le \theta + n\cdot\Ex{X\sim\D}{\max\{X-\theta, 0\}}.\]
\end{lemma}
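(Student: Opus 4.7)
The plan is to establish the claim via a deterministic pointwise inequality and then take expectations. First I would observe that for every realization and any $\theta \in \R$,
\[\Xmax \le \theta + \max\{\Xmax - \theta, 0\},\]
which holds trivially: when $\Xmax \le \theta$ the right side is at least $\theta \ge \Xmax$, and when $\Xmax > \theta$ the two sides coincide. (This is also the reason the bound is nontrivial only when $\theta$ is chosen sensibly, though as stated it is valid for every real $\theta$, including negative values.)

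Second, I would interchange the positive part with the max over the $n$ coordinates. Since $\max_{i \in [n]} X_i - \theta = \max_{i \in [n]} (X_i - \theta)$, taking the positive part gives
\[\max\{\Xmax - \theta, 0\} = \max_{i \in [n]} \max\{X_i - \theta, 0\} \le \sum_{i=1}^n \max\{X_i - \theta, 0\},\]
where the last step upper-bounds a maximum of nonnegative reals by their sum. Chaining with the previous display yields the pointwise bound $\Xmax \le \theta + \sum_{i=1}^n \max\{X_i - \theta, 0\}$.

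Finally I would take expectations on both sides. By linearity and the fact that the $X_i$'s are i.i.d.\ from $\D$, each summand contributes $\Ex{X \sim \D}{\max\{X - \theta, 0\}}$, so the sum collapses to $n\cdot\Ex{X \sim \D}{\max\{X - \theta, 0\}}$, matching the statement. The proof is entirely elementary and I do not anticipate any real obstacle; the only minor subtlety is confirming that the first inequality holds for negative $\theta$ as well, which is immediate from $\max\{\Xmax - \theta, 0\} \ge \Xmax - \theta$.
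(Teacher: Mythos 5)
Your proof is correct and follows essentially the same route as the paper's: bound $\Xmax$ pointwise by $\theta + \max\{\Xmax - \theta,0\}$, replace the positive-part-of-max by the sum of positive parts, and take expectations using the i.i.d.\ assumption. The extra remark about negative $\theta$ is a harmless bonus detail.
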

\begin{proof}
    We have
    \begin{align*}
        \Ex{X_1,\ldots,X_n\sim\D}{\Xmax}
    &\le \theta + \Ex{X_1,\ldots,X_n\sim\D}{\max\{\Xmax - \theta, 0\}}\\
    &\le \theta + \Ex{X_1,\ldots,X_n\sim\D}{\sum_{i=1}^{n}\max\{X_i - \theta, 0\}}
    =   \theta + n\cdot\Ex{X\sim\D}{\max\{X-\theta, 0\}},
    \end{align*}
    where the second step holds since $\max\{\Xmax - \theta, 0\}$ is equal to the maximum of $\max\{X_i - \theta, 0\}$ over all $i \in [n]$, which is in turn upper bounded by their sum.
\end{proof}

Another simple fact is that passing a test at $\tau_{\alpha}$ gives an expected remaining utility of $\Omega(\tau_{\alpha/2} - \tau_{\alpha})$.

\begin{lemma}\label{lemma:cond-exp}
    For any distribution $\D$ and $\alpha \in (0, 1]$,
    \[
        \Ex{X \sim \D}{X - \tau_\alpha|X > \tau_\alpha} \ge \frac{\tau_{\alpha/2} - \tau_{\alpha}}{2}.
    \]
\end{lemma}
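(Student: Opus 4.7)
The plan is to prove this by a direct conditioning argument on the event $\{X > \tau_{\alpha/2}\}$ inside the larger event $\{X > \tau_\alpha\}$. Since $\alpha/2 < \alpha$, the definition of quantiles in Remark~\ref{remark:distribution} gives $\tau_{\alpha/2} \ge \tau_\alpha$, so the smaller event is contained in the larger one.

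First I would compute the conditional probability. By definition, $\pr{X \sim \D}{X > \tau_\alpha} = \alpha$ and $\pr{X \sim \D}{X > \tau_{\alpha/2}} = \alpha/2$, so
\[
    \pr{X \sim \D}{X > \tau_{\alpha/2} \mid X > \tau_\alpha} = \frac{\alpha/2}{\alpha} = \frac{1}{2}.
\]
Next I would split the conditional expectation using this event: on $\{X > \tau_{\alpha/2}\}$ we have $X - \tau_\alpha \ge \tau_{\alpha/2} - \tau_\alpha$, while on the complement (still inside $\{X > \tau_\alpha\}$) we have $X - \tau_\alpha > 0$. Dropping the non-negative contribution from the complement yields
\[
    \Ex{X \sim \D}{X - \tau_\alpha \mid X > \tau_\alpha}
    \ge \frac{1}{2}\cdot (\tau_{\alpha/2} - \tau_\alpha) + \frac{1}{2}\cdot 0
    = \frac{\tau_{\alpha/2} - \tau_\alpha}{2}.
\]

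There is essentially no obstacle here; the only subtlety is ensuring the quantile identities hold exactly, which is precisely why the continuity assumption on $\D$ was made in Remark~\ref{remark:distribution}. If one wanted to handle discontinuous $\D$, one would replace equalities in the quantile probabilities with appropriate inequalities, but the same two-line argument goes through with the constant $1/2$ possibly replaced by a slightly worse factor, so this is not where the action is.
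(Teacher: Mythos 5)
Your proof is correct and essentially the same as the paper's: both arguments hinge on the observation that, conditional on $X > \tau_\alpha$, the event $\{X > \tau_{\alpha/2}\}$ has probability exactly $1/2$, on which $X - \tau_\alpha \ge \tau_{\alpha/2} - \tau_\alpha$, while the complementary contribution is non-negative and can be dropped. The paper writes this via indicator functions and a pointwise inequality $(x-b)\cdot\1{x\ge b} \ge (a-b)\cdot\1{x\ge a}$, but the underlying decomposition is identical to yours.
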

\begin{proof}
We have
\begin{align*}
    \Ex{X\sim\D}{X-\tau_\alpha|X > \tau_\alpha}
&=   \frac{\Ex{X\sim\D}{(X-\tau_\alpha)\cdot\1{X > \tau_\alpha}}}{\pr{X\sim\D}{X > \tau_\alpha}}\\
&\ge \frac{\Ex{X\sim\D}{(\tau_{\alpha/2}-\tau_{\alpha})\cdot\1{X > \tau_{\alpha/2}}}}{\pr{X\sim\D}{X > \tau_\alpha}}\\
&=  (\tau_{\alpha/2}-\tau_{\alpha})\cdot\frac{\pr{X\sim\D}{X > \tau_{\alpha/2}}}{\pr{X\sim\D}{X > \tau_\alpha}}
=   \frac{1}{2}(\tau_{\alpha/2}-\tau_{\alpha}).
\end{align*}
The second step follows from that $(x-b)\cdot\1{x\ge b} \ge (a-b)\cdot\1{x\ge a}$ for any $x$ and $a > b$.
\end{proof}

Lemma~\ref{lemma:prophet-optimum} with $\theta = \tau_{1/n}$ shows that $\Ex{X_1, \ldots, X_n \sim \D}{\Xmax} \le \tau_{1/n} + n\cdot\Ex{X \sim \D}{\max\{X - \tau_{1/n}, 0\}}$.
In the following, we prove Theorem~\ref{thm:prophet} in the i.i.d.\ case by giving two different algorithms. The score of the first algorithm is guaranteed to match the second term, $n\cdot\Ex{X\sim\D}{\max\{X-\tau_{1/n}, 0\}}$, up to a constant factor. The second algorithm, on the other hand, achieves an expected score of $\tau_{1/n}/O(\log n)$.

\begin{proof}[Proof of Theorem~\ref{thm:prophet} (i.i.d.\ case)]
We consider a single-threshold algorithm that tests every option with the same chosen threshold $\theta$, and accepts the first option that passes the test.

\paragraph{The first algorithm.} We use threshold $\theta = \tau_{1/n}$. Since $\pr{X \sim \D}{X > \tau_{1/n}} = 1/n$, the algorithm accepts one of the $n$ options with probability $1 - (1-1/n)^n > 1-1/e$. Furthermore, conditioning on that the algorithm accepts, the expected remaining utility is
\begin{align*}
    \Ex{X\sim\D}{X-\tau_{1/n}|X > \tau_{1/n}}
&=   \frac{\Ex{X\sim\D}{(X-\tau_{1/n})\cdot\1{X > \tau_{1/n}}}}{\pr{X\sim\D}{X > \tau_{1/n}}}\\
&=   \frac{\Ex{X\sim\D}{\max\{X-\tau_{1/n}, 0\}}}{\pr{X\sim\D}{X > \tau_{1/n}}} \tag{$a\cdot\1{a > 0}=\max\{a,0\}$}\\
&=  n\cdot\Ex{X\sim\D}{\max\{X-\tau_{1/n}, 0\}}. \tag{definition of $\tau_{1/n}$}
\end{align*}

Therefore, the expected score of this algorithm matches the $n\cdot\Ex{X\sim\D}{\max\{X-\tau_{1/n}, 0\}}$ term up to a factor of $1-1/e$.

\paragraph{The second algorithm.} Let $k = \lceil \log_2 n\rceil$. Our second algorithm draws $\alpha$ uniformly at random from $\{1, 2^{-1}, 2^{-2}, \ldots, 2^{-(k-1)}\}$ and uses the threshold $\theta = \tau_\alpha$. Conditioning on the choice of $\alpha$, the probability that one of the $n$ options gets accepted is $1 - (1 - \alpha)^n > 1 - (1 - 1/n)^n > 1 - 1/e$. Furthermore, conditioning on that one of the options is accepted, the expected score is at least $\frac{\tau_{\alpha/2} - \tau_\alpha}{2}$ by Lemma~\ref{lemma:cond-exp}. Thereby, this algorithm achieves an expected score of
\[
    \frac{1}{k}\cdot(1 - 1/e)\cdot\sum_{j=0}^{k-1}\frac{\tau_{2^{-(j+1)}} - \tau_{2^{-j}}}{2}
=   \frac{(1-1/e)(\tau_{2^{-k}} - \tau_1)}{2k}
\ge   \frac{\tau_{1/n}}{O(\log n)}.
\]

An $O(\log n)$-competitive algorithm follows from randomizing between the two algorithms above.
\end{proof}

We remark that in the proof above, the final algorithm is a mixture of single-threshold algorithms. Thus, there always exists a threshold $\tau$ (that depends on $\D$) at which the single-threshold algorithm is $O(\log n)$-competitive.

\section{Prophet Setting: The General Case}\label{sec:prophet-general}
Now we tackle the general case that $\D_1, \D_2, \ldots, \D_n$ are not necessarily identical. We still assume for simplicity that each $\D_i$ is continuous. Again, Remark~\ref{remark:distribution} allows us to define $\tau^{(i)}_\alpha$ as the smallest $(1-\alpha)$-quantile of $\D_i$ for every $\alpha \in (0, 1]$. In the following, $X_1, X_2, \ldots, X_n$ are independent samples from $\D_1, \D_2, \ldots, \D_n$ respectively and we define $\Xmax \coloneqq \max_{i \in [n]}X_i$. The continuity of $\D_1, \ldots, \D_n$ implies that $\Xmax$ also follows a continuous distribution. Therefore, we may define $\tau_{\alpha}$ as the smallest $(1 - \alpha)$-quantile of $\Xmax$.

Our proof first upper bounds $\Ex{X \sim \D}{\Xmax}$ in terms of $\tau_{1/2}$ (analogously to Lemma~\ref{lemma:prophet-optimum}):
\[
    \Ex{X \sim \D}{\Xmax}
\le \tau_{1/2} + \sum_{i=1}^{n}\Ex{X_i \sim \D_i}{\max\{X_i - \tau_{1/2}, 0\}}.
\]
We will show that the single-threshold algorithm with $\theta = \tau_{1/2}$ achieves a score that is at least half of the second term above, so the main challenge is to match the first term $\tau_{1/2}$ up to an $O(\log n)$ factor. However, unlike the i.i.d.\ setting, this \emph{cannot} be done using the same threshold for every option. Instead, we will bucket the distributions based on their tails, and only test the options in the most significant group with thresholds that are chosen based on their individual distributions.

\begin{proof}[Proof of Theorem~\ref{thm:prophet}]
We start by writing the expected optimum into two parts. For each part, we will give an algorithm whose score matches the part up to an $O(\log n)$ factor. Similar to the proof of Lemma~\ref{lemma:prophet-optimum}, we have
\[
    \Ex{X \sim \D}{\Xmax}
\le   \tau_{1/2} + \Ex{X \sim \D}{\max\{\Xmax - \tau_{1/2}, 0\}}
\le \tau_{1/2} + \sum_{i=1}^n\Ex{X_i \sim \D_i}{\max\{X_i - \tau_{1/2}, 0\}}.
\]

\paragraph{Group distributions based on tails.} Let $\alpha_i = \pr{X_i \sim \D_i}{X_i > \tau_{1/2}}$ be the probability that $X_i$ exceeds the median of $\Xmax$. The definition of $\tau_{1/2}$ implies that
\[
    1/2
=   \pr{X \sim \D}{\Xmax \le \tau_{1/2}}
=   \prod_{i=1}^{n}\pr{X_i \sim \D_i}{X_i \le \tau_{1/2}}
=   \prod_{i=1}^{n}(1 - \alpha_i)
\ge 1 - \sum_{i=1}^{n}\alpha_i,
\]
and thus, $\sum_{i=1}^{n}\alpha_i \ge 1/2$. Let $k = \lceil\log_2 n\rceil$. We partition the $n$ distributions into $k + 3$ groups depending on $\alpha_i$: For $j = 0, 1, 2, \ldots, k+1$, we define
\[
    G_j \coloneqq \left\{i \in [n]: 2^{-(j+1)} < \alpha_i \le 2^{-j}\right\}.
\]
Furthermore, define $G_{k+2} \coloneqq \left\{i \in [n]: \alpha_i \le 2^{-(k+2)}\right\}$. Then, we have
\[
    1/2
\le \sum_{i=1}^{n}\alpha_i
=   \sum_{j=0}^{k+1}\sum_{i\in G_j}\alpha_i + \sum_{i \in G_{k+2}}\alpha_i
\le \sum_{j=0}^{k+1}\frac{|G_j|}{2^j} + \frac{|G_{k+2}|}{4n}
\le \sum_{j=0}^{k+1}\frac{|G_j|}{2^j} + \frac{1}{4},
\]
and thus, $\sum_{j=0}^{k+1}|G_j|/2^j \ge 1/4$. This implies that for some $j^* \in \{0, 1, \ldots, k+1\}$,
\[
    |G_{j^*}|/2^{j^*} \ge \frac{1}{4(k+2)} = \frac{1}{O(\log n)}.
\]

\paragraph{Match the $\tau_{1/2}$ term.}

We use the following algorithm:
\begin{itemize}
    \item Draw $\alpha$ randomly from some distribution over $\{1, 2^{-1}, 2^{-2}, \ldots, 2^{-j^*}\}$ to be determined later.
    \item Partition $G_{j^*}$ into $t \coloneqq \left\lceil\frac{|G_{j^*}|}{1/\alpha}\right\rceil$ blocks $B_1, B_2, \ldots, B_t$ such that: (1) Each block except $B_t$ is of size $1/\alpha$; (2) the blocks are sorted chronologically, i.e., $\max B_k < \min B_{k+1}$ for every $k \in [t - 1]$.
    \item Pick one of the $t$ blocks, $B_k$, uniformly at random.
    \item At each step $i$, ignore and reject option $i$ if $i \notin B_k$. Otherwise, test it with $\theta_i = \tau^{(i)}_\alpha$ and accept if the test passes.
\end{itemize}
We first analyze the above algorithm for fixed $\alpha \in \{1, 2^{-1}, \ldots, 2^{-j^*}\}$, and then specify the distribution from which $\alpha$ is drawn. Suppose that $i \in G_{j^*}$ is the $l$-th smallest number in its block $B_k$. Note that $l \in [1/\alpha]$. Then, we accept option $i$ if the following three happen simultaneously: (1) $B_k$ is chosen at the third step of the algorithm; (2) None of the $(l-1)$ options prior to $i$ passes the test; (3) The test passes at step $i$. All these three happen with probability
\[
    \frac{1}{t}\cdot(1 - \alpha)^{l - 1}\cdot\alpha
\ge \frac{1}{\frac{|G_{j^*}|}{1/\alpha} + 1}(1 - \alpha)^{1/\alpha-1}\cdot\alpha
\ge \frac{\alpha}{e(\alpha|G_{j^*}| + 1)}.
\]
Furthermore, by Lemma~\ref{lemma:cond-exp}, conditioning on that option $i$ passes the test at $\theta_i = \tau^{(i)}_{\alpha}$, the expected score is at least $\frac{\tau^{(i)}_{\alpha/2} - \tau^{(i)}_{\alpha}}{2}$.

Therefore, conditioning on the choice of $\alpha$, the expected score is lower bounded by
\[
    \frac{\alpha}{2e(\alpha|G_{j^*}| + 1)} \cdot \sum_{i\in G_{j^*}}(\tau^{(i)}_{\alpha/2} - \tau^{(i)}_{\alpha})
\ge \frac{\min\left\{\alpha|G_{j^*}|, 1\right\}}{4e}\cdot \frac{1}{|G_{j^*}|}\sum_{i\in G_{j^*}}(\tau^{(i)}_{\alpha/2} - \tau^{(i)}_{\alpha}).
\]
Let $Z \coloneqq \sum_{j=0}^{j^*}\frac{1}{\min\{2^{-j}|G_{j^*}|, 1\}}$. It follows from $2^{-j^*}|G_{j^*}| \ge \frac{1}{O(\log n)}$ that $Z = O(\log n)$. Then, if we set $\alpha$ to $2^{-j}$ with probability $\frac{1/\min\{2^{-j}|G_{j^*}|, 1\}}{Z}$ for each $j \in \{0, 1, \ldots, j^*\}$, our expected score is at least
\[
    \frac{1}{4eZ}\cdot\frac{1}{|G_{j^*}|}\sum_{j=0}^{j^*}\sum_{i\in G_{j^*}}(\tau^{(i)}_{2^{-(j+1)}} - \tau^{(i)}_{2^{-j}})
=   \frac{1}{4eZ|G_{j^*}|}\sum_{i\in G_{j^*}}\tau^{(i)}_{2^{-(j^*+1)}}
>   \frac{\tau_{1/2}}{4eZ}
=   \frac{\tau_{1/2}}{O(\log n)}.
\]
The second step above holds since every $i \in G_{j^*}$ satisfies $\pr{X_i \sim \D_i}{X_i > \tau_{1/2}} > 2^{-(j^*+1)}$, which implies $\tau^{(i)}_{2^{-(j^*+1)}} > \tau_{1/2}$. 

\paragraph{Match the second term.} It remains to give another algorithm with an expected score comparable to $\sum_{i=1}^n\Ex{X_i \sim \D_i}{\max\{X_i - \tau_{1/2}, 0\}}$. In fact, the single-threshold algorithm with $\tau_{1/2}$ would suffice: it gives an expected score of
\begin{align*}
    \sum_{i=1}^{n}\left[\left(\prod_{j=1}^{i-1}\pr{X_j \sim \D_j}{X_j \le \tau_{1/2}}\right)\cdot\pr{X_i \sim \D_i}{X_i > \tau_{1/2}}\cdot\Ex{X_i \sim \D_i}{X_i - \tau_{1/2}|X_i > \tau_{1/2}}\right].
\end{align*}
For each $i \in [n]$, the first term $\prod_{j=1}^{i-1}\pr{X_j \sim \D_j}{X_j \le \tau_{1/2}}$ is lower bounded by $\pr{X \sim \D}{\Xmax \le \tau_{1/2}} = 1/2$. Furthermore, $\pr{X_i \sim \D_i}{X_i > \tau_{1/2}}\cdot\Ex{X_i \sim \D_i}{X_i - \tau_{1/2}|X_i > \tau_{1/2}}$ is equal to
\[
    \Ex{X_i \sim \D_i}{(X_i - \tau_{1/2})\cdot\1{X_i > \tau_{1/2}}}
=   \Ex{X_i \sim \D_i}{\max\{X_i - \tau_{1/2}, 0\}}.
\]
Therefore, the expected score is lower bounded by $\frac{1}{2}\sum_{i=1}^n\Ex{X_i \sim \D_i}{\max\{X_i - \tau_{1/2}, 0\}}$.

Finally, randomizing between the two algorithms proves the $O(\log n)$ competitive ratio.
\end{proof}

\section{Algorithms for the Secretary Setting}\label{sec:secretary-upper}
In the secretary setting, the $n$ values $X_1, X_2, \ldots, X_n$ are guaranteed to be obtained from re-ordering (either randomly or arbitrarily) $n$ numbers $a_1, a_2, \ldots, a_n \ge 0$. Furthermore, we know either $(a_i)_{i=1}^n$ completely (full information), or only the maximum $\max_{i\in[n]}a_i$ (optimum information), or nothing at all (no information). In this section, we develop several $O(\log n)$-competitive algorithms under various combinations of the order assumption and the information about $(a_i)_{i=1}^n$.

Our starting point is the simplest case---random ordering and full information---which admits a single-threshold algorithm. We will show that a similar algorithm also succeeds when only the maximum is known. To handle the two settings in Theorem~\ref{thm:secretary-upper}, however, we need to drop either the knowledge about the maximum $a_i$, or the assumption on the arrival order. The former case can be handled by a standard technique, whereas the latter requires us to sample uniformly from a sequence of unknown length, a problem that turns out to be non-trivial. We solve this latter challenge using a more careful sampling scheme, and thus prove the theorem.

Finally, we go back to the simplest setting, and give a slightly better algorithm that is $O\left(\frac{\log n}{\log\log n}\right)$-competitive.

\subsection{Warmup: Random Order, Full or Optimum Information}
Suppose that we know $a_1, a_2, \ldots, a_n$, and $(X_i)_{i=1}^n$ is a uniformly random permutation of $(a_i)_{i=1}^n$. Let $a_{[i]}$ denote the $i$-th largest value among $a_1, \ldots, a_n$. In the following, we give a simple algorithm with an expected score of $\Omega\left(\frac{a_{[1]}}{\log n}\right)$ and thus an $O(\log n)$ competitive ratio.

Pick $k = \lfloor\log_2n\rfloor + 2 = O(\log n)$ such that $2^{k-1} > n$. For each $j \in [k]$, let
\[
    n_j \coloneqq
\left|\left\{i\in[n]: \frac{j - 1}{k}a_{[1]} < a_i \le \frac{j}{k}a_{[1]}\right\}\right|
\]
be the number of values that are approximately a $\frac{j}{k}$ fraction of the maximum. Clearly, $\sum_{j=1}^{k}n_j \le n$ and $n_k \ge 1$. We use the shorthand notation $n_{\ge j} \coloneqq n_j + n_{j+1} + \cdots + n_k$. We claim that there exists some $j^* \in \{1, 2, \ldots, k - 1\}$ such that $n_{j^*} < n_{\ge j^*+1}$. Otherwise, a simple induction shows $n_{\ge j} \ge 2^{k-j}$ for every $1 \le j \le k$, which implies $n \ge \sum_{j=1}^{k}n_j \ge 2^{k - 1} > n$, a contradiction. Furthermore, given $a_1, a_2, \ldots, a_n$, we can easily identify such an index $j^*$.

Then, the single-threshold algorithm at $\theta = \frac{j^* - 1}{k}a_{[1]}$ succeeds. Recall that this algorithm accepts the first $X_i$ that exceeds $\theta$ and gives a score of $X_i - \theta$. By definition, the number of options that could pass the test is exactly $n_{\ge j^*}$. Since $n_{j^*} < n_{\ge j^*+1} = n_{\ge j^*} - n_{j^*}$, at least half of these $n_{\ge j^*}$ options (initially) have values $\ge\frac{j^*}{k}a_{[1]}$. Given that the values are shuffled uniformly at random, with probability at least $1/2$, the remaining utility of the option that we accept is at least
\[
    \frac{j^*}{k}a_{[1]} - \theta
=   \frac{j^*}{k}a_{[1]} - \frac{j^* - 1}{k}a_{[1]}
=   \frac{a_{[1]}}{k}.
\]
Our expected utility is then lower bounded by $\frac{1}{2}\cdot\frac{a_{[1]}}{k} = \Omega\left(\frac{a_{[1]}}{\log n}\right)$.

\paragraph{Extension to optimum information case.} In the algorithm above, the knowledge of $(a_i)_{i=1}^n$ is only used for choosing the right $j^*$ in the threshold. We show that a similar algorithm that picks $j^*$ randomly is still $O(\log n)$ competitive, and thus can be applied to the optimum information case.

Under the same definition of $k$, $n_j$ and $n_{\ge j}$ as above, we draw $j$ uniformly at random from $[k-1]$, and run the single-threshold algorithm with $\theta = \frac{j-1}{k}\cdot a_{[1]}$. Conditioning on the choice of $j$, we accept one of the $n_{\ge j}$ options with value $> \theta$ uniformly at random, so our utility is at least $\frac{a_{[1]}}{k}$ with probability at least $\frac{n_{\ge j+1}}{n_{\ge j}}$. Averaging over the randomness in $j$ lower bounds the expected score by
\[
    \frac{a_{[1]}}{k}\cdot\frac{1}{k-1}\sum_{j=1}^{k-1}\frac{n_{\ge j+1}}{n_{\ge j}}
\ge \frac{a_{[1]}}{k}\cdot\left(\prod_{j=1}^{k-1}\frac{n_{\ge j+1}}{n_{\ge j}}\right)^{\frac{1}{k-1}}
=   \frac{a_{[1]}}{k}\cdot\left(\frac{n_{\ge k}}{n_{\ge 1}}\right)^{\frac{1}{k-1}}
\ge \frac{a_{[1]}}{k}\cdot n^{-\frac{1}{k-1}}.
\]
Our choice of $k$ gives $n^{-\frac{1}{k-1}} > 1/2$, so our score is a $\frac{1}{2k} = \frac{1}{O(\log n)}$ fraction of the optimum $a_{[1]}$.

\subsection{Random Order, with No Information}
We further discard the knowledge of $a_{[1]}$ and prove Case~(1) of Theorem~\ref{thm:secretary-upper}: unknown values $a_1, a_2, \ldots, a_n$ arrive in a uniformly random order. A simple idea that is often used in secretary problems is to observe half of the options, and use the largest value among them as an estimate of $a_{[1]}$. If $a_{[1]}$ appears in the latter half, while this estimate is within a constant factor to $a_{[1]}$, we would obtain a competitive algorithm by making the argument in the previous section robust. However, this never holds if $a_{[1]}$ is much larger than the second largest value $a_{[2]}$. Fortunately, another simple algorithm works for this case---single-threshold with the maximum among the first half as the threshold.

To formalize this idea, we state the following lemma that further generalize the ``warmup'' algorithm to the case where we are only given a ``hint'' about $a_{[1]}$.

\begin{lemma}\label{lemma:secretary-random-approx-opt}
    In the secretary setting under random order, there is an algorithm that, given any $\Xmaxhat$ that lies in $[0, a_{[1]}]$, achieves a score of at least $\frac{\Xmaxhat}{O(\log n)}$ in expectation.
\end{lemma}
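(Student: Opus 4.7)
The plan is to reduce to the ``warmup'' algorithm (with a random $j^*$) that was already analyzed for the optimum-information case, by simply substituting the hint $\Xmaxhat$ for the unknown maximum $a_{[1]}$. The only property of $a_{[1]}$ that the warmup algorithm used was (i) a scale for the bucket widths and (ii) the fact that at least one value falls above the last threshold. The hypothesis $\Xmaxhat \le a_{[1]}$ will give us both.

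Concretely, I would set $k = \lfloor \log_2 n \rfloor + 2$ and partition $[0, \infty)$ into $k$ buckets of width $\Xmaxhat/k$, where the top bucket is the ray $\bigl((k-1)\Xmaxhat/k, \infty\bigr)$. Let $n_j$ count the $a_i$'s in bucket $j$ and let $n_{\ge j} = n_j + n_{j+1} + \cdots + n_k$. Since $a_{[1]} \ge \Xmaxhat > (k-1)\Xmaxhat/k$, the top bucket is nonempty, i.e.\ $n_{\ge k} \ge 1$; trivially $n_{\ge 1} \le n$. The algorithm draws $j$ uniformly from $[k-1]$ and runs the single-threshold strategy with $\theta = (j-1)\Xmaxhat/k$.

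Under a uniformly random arrival order, conditioned on $j$, the accepted option is a uniformly random one among the $n_{\ge j}$ options with value $> \theta$. It lies in a bucket $\ge j+1$ with probability $n_{\ge j+1}/n_{\ge j}$, in which case the remaining utility is at least $j\Xmaxhat/k - (j-1)\Xmaxhat/k = \Xmaxhat/k$. Averaging over $j$ and applying AM--GM exactly as in the warmup, the expected score is at least
\[
\frac{\Xmaxhat}{k} \cdot \frac{1}{k-1}\sum_{j=1}^{k-1}\frac{n_{\ge j+1}}{n_{\ge j}}
\ge \frac{\Xmaxhat}{k}\cdot\left(\frac{n_{\ge k}}{n_{\ge 1}}\right)^{1/(k-1)}
\ge \frac{\Xmaxhat}{k}\cdot n^{-1/(k-1)} \ge \frac{\Xmaxhat}{2k},
\]
where the last inequality follows from $2^{k-1} > n$.

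There is no real obstacle here: this is essentially a copy of the optimum-information warmup with $a_{[1]}$ replaced by $\Xmaxhat$. The only point to verify carefully is that the top bucket is nonempty, which is exactly where the assumption $\Xmaxhat \le a_{[1]}$ is used (and for $\Xmaxhat = 0$ the claim is vacuous). Since the algorithm never references $\Xmaxhat$ beyond setting the thresholds, no additional information about $(a_i)_{i=1}^n$ is required, which is what makes this lemma useful as a black box both for Case~(1) of Theorem~\ref{thm:secretary-upper} (where $\Xmaxhat$ will be the running maximum of a prefix) and, later, for Theorem~\ref{thm:prophet-single-sample}.
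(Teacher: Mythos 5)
Your proof is correct and is essentially identical to the paper's proof in Appendix~\ref{sec:omitted-secretary-upper}: the same $k = \lfloor\log_2 n\rfloor + 2$, the same random choice of $j \in [k-1]$ with threshold $\theta = \frac{j-1}{k}\Xmaxhat$, and the same AM--GM step lower-bounding $\frac{1}{k-1}\sum_{j} \frac{n_{\ge j+1}}{n_{\ge j}}$ by $n^{-1/(k-1)}$. The observation that $\Xmaxhat \le a_{[1]}$ guarantees $n_{\ge k} \ge 1$ is exactly the key point the paper also makes.
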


The proof of the lemma follows from simply replacing $a_{[1]}$ with $\Xmaxhat$ in the argument above; we give a formal proof in Appendix~\ref{sec:omitted-secretary-upper} for completeness.

\begin{proof}[Proof of Theorem~\ref{thm:secretary-upper}, Case (1)]
    We observe the values of the first $m \coloneqq \lfloor n/2\rfloor$ options (by using threshold $+\infty$), and let $\Xmaxhat$ be the largest among them. With probability $\frac{m\cdot(n - m)}{n \cdot (n - 1)} \ge 1/4$, an option with value $a_{[2]}$ appears in the first $m$ ones, while an option with value $a_{[1]}$ is in the remaining $n-m$ ones. This implies $\Xmaxhat = a_{[2]} \le a_{[1]}$ and we condition on this event in the following.
    
    Conditioning on the options that appear in the first $m$ steps, the remaining $n - m$ options are still a uniformly random permutation of the unseen ones. Therefore, applying the algorithm from Lemma~\ref{lemma:secretary-random-approx-opt} with the hint $\Xmaxhat$ gives an expected score of at least $\frac{\Xmaxhat}{O(\log n)} = \frac{a_{[2]}}{O(\log n)}$. On the other hand, if we simply test each of the last $n - m$ options with threshold $\Xmaxhat$ and accept the first one that passes the test, we will either accept an option with initial value $a_{[1]}$ (if $a_{[1]} > a_{[2]}$), or accept nothing (if $a_{[1]} = a_{[2]}$). In either case, our score is $a_{[1]} - a_{[2]}$. Therefore, using one of the two strategies randomly gives an expected score of at least
    \[
        \frac{1}{4}\cdot\frac{1}{2}\left(\frac{a_{[2]}}{O(\log n)} + a_{[1]} - a_{[2]}\right)
    =   \frac{a_{[1]}}{O(\log n)}.
    \]
\end{proof}

\subsection{Arbitrary Order, with Optimum Information}\label{sec:bit-sampling}
We give another $O(\log n)$-competitive algorithm when $a_1, a_2, \ldots, a_n$ arrive in an arbitrary order, and we only know the maximum value $a_{[1]}$. Recall that the ``warmup'' algorithm is the single-threshold algorithm with a randomly chosen $\theta = \frac{j - 1}{k}a_{[1]}$, where $k = \Theta(\log n)$ and $j$ is uniformly drawn from $[k - 1]$. The analysis uses the following two observations. First, among the options that could pass the test, a constant fraction of them are ``good'' in the sense that their values are higher than $\theta + a_{[1]}/k$, which give a score of $\ge a_{[1]}/k$. Second, since the options arrive in a random order, the one that we accept is ``good'' with $\Omega(1)$ probability.

The same algorithm fails in the arbitrary order case, since the options might be adversarially ordered such that the ``good'' options always appear after the ``bad'' ones. If we (hypothetically) knew the number $n_{\ge j}$ of total options that could pass the test at $\theta$, this would not pose a challenge, since we could instead draw $l$ uniformly at random from $[n_{\ge j}]$ and accept the $l$-th option that passes the test. Unfortunately, we cannot obtain (or even estimate) $n_{\ge j}$ without knowing $a_1, a_2, \ldots, a_n$.

\paragraph{A bit sampling game.} The following ``bit sampling'' game is an abstraction of this challenge. An adversary picks a binary sequence of an unknown length $m \in [n]$, with the only restriction that the fraction of ones is strictly higher than $1/2$. The player, knowing $n$ but not $m$, observes the bits one by one, and may choose to commit to the next unseen bit at any point (including before seeing any bits). The player wins if the chosen bit is a ``$1$'', and loses if it either commits to a ``$0$'', or fails to select a bit before the end of the sequence.

The player would easily win with probability $> 1/2$ if it could select one of the $m$ bits uniformly at random. This uniform sampling would be possible if either the sequence length $m$ were known, or the bits came in a random order. When the order is arbitrary and $m$ is unknown, it \emph{might} appear that the player's only strategy is to guess the sequence length $\hat m$, and to sample one of the first $\hat m$ bits uniformly. Unfortunately, this succeeds only if $\hat m$ is within a constant factor to the actual sequence length $m$, which at best happens with probability $O\left(\frac{1}{\log n}\right)$.

Perhaps surprisingly, with a better strategy, the player wins the game with a \emph{constant} probability regardless of the maximum sequence length $n$.

\begin{lemma}\label{lemma:bit-sampling}
    In the bit sampling game, the player has a strategy that wins the game with probability at least $\frac{1}{6}$.
\end{lemma}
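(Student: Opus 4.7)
The main challenge is to beat the $O(1/\log n)$ bound that would come from any naive ``guess $m$'' approach. My plan has two parts: first, explaining why non-adaptive strategies lose a logarithmic factor, and second, designing an adaptive mixture that recovers a constant.

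For the first part, I would observe that any non-adaptive strategy corresponds to a commit-time distribution $(p_t)_{t \ge 1}$ with $\sum_t p_t \le 1$. For any target length $m$, the adversary puts zeros at the $\lceil (m+1)/2 \rceil - 1$ positions of largest $p_t$ within $[m]$, so the win probability against this adversary is at most the sum of the smallest $\lfloor m/2 \rfloor + 1$ among $p_1, \ldots, p_m$. To keep this quantity bounded below by a constant for all $m \in [n]$ simultaneously one would need $p_t = \Theta(1/t)$, contradicting $\sum_{t \le n} p_t \le 1$. So adaptivity is necessary, and it must use the observed bits to shift where the commit probability lies depending on $b$.

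The adaptive strategy I propose is a uniform mixture over three subroutines, each handling a different structural case of $b$. Subroutine~(i) commits to $b_1$ immediately; it wins whenever $b_1 = 1$. Subroutine~(ii) reads bits, and at the $k$-th observed zero (at time $\tau_k$), commits to $b_{\tau_k+1}$ with a carefully chosen adaptive probability $\pi_k$; this handles sequences whose zeros are predominantly followed by ones. Subroutine~(iii) is symmetric: at the $k$-th observed one (at time $\sigma_k$), commit to $b_{\sigma_k+1}$ with adaptive probability $\rho_k$; this handles sequences whose ones are predominantly followed by ones. Because $b$ has strictly more than $m/2$ ones, a counting argument shows that in every valid sequence, the number of ``good'' positions (indices $i < m$ with $b_{i+1} = 1$) exceeds $m/2$, and these good positions must be distributed either at ``zero-to-one'' transitions (helping (ii)) or ``one-to-one'' transitions (helping (iii)), with the start of $b$ contributing to (i). This trichotomy is the key combinatorial structural insight.

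The analysis would then split into cases based on which of the three sub-strategies is ``active'' for a given $b$: in each case I would show that the active subroutine wins with conditional probability at least $1/2$, so the uniform $1/3$ mixture gives overall win probability at least $1/6$. The main obstacle, and the reason the decay schedules $\pi_k$ and $\rho_k$ must be adaptive and not constant, is that the adversary can arrange the sequence so that, say, every early zero is followed by another zero (e.g., a long run of zeros at the start); here the naive ``commit after a random zero'' strategy loses geometrically. My plan to overcome this is to let $\pi_k$ depend not just on $k$ but on the running ratio $Z_{\tau_k}/\tau_k$ of zeros observed so far, concentrating the commit mass on indices where the running ratio crosses a threshold near $1/2$---a condition that must be met by some early zero because the global zero-fraction is strictly below $1/2$. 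Tuning the thresholds and the three mixing weights to produce exactly $1/6$ is a matter of bookkeeping once the structural lemma (that one of the three cases always contributes constant conditional probability) is established.
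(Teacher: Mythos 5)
Your proposal identifies the right obstacle (the $O(1/\log n)$ barrier for non-adaptive strategies, and the need for adaptivity) but it does not resolve it; it recursively reproduces it. Subroutines (ii) and (iii) each need to ``commit after the $k$-th observed zero (resp.\ one) with probability $\pi_k$ (resp.\ $\rho_k$)'' in a way that, without knowing how many zeros (resp.\ ones) the adversary will ultimately produce, puts $\Omega(1)$ mass on a zero that is followed by a one. This is precisely the same sampling-from-an-unknown-length-prefix problem as the original bit-sampling game: the adversary can front-load all the bad zeros (zeros followed by zeros) and place the good ones at the end, and any fixed or ratio-threshold schedule $\pi_k$ either commits too early and loses geometrically or commits too late and runs off the end. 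You acknowledge this issue and gesture at making $\pi_k$ depend on the running ratio $Z_{\tau_k}/\tau_k$, but the schedule is never defined and no argument is given that it concentrates mass correctly. Moreover, your counting observation only shows that the good positions (predecessors of ones) split among the three categories; it does not show that in the relevant category the good positions outnumber the bad ones (needed for even an idealized uniform sample to win with probability $\ge 1/2$), nor does it change the fact that a uniform sample is unavailable. The claimed structural lemma is thus neither established nor, as stated, sufficient, and ``tuning the thresholds'' is exactly where the entire difficulty lives.

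The paper's proof avoids all of this by collapsing the state to a single scalar: the running difference $\Delta =$ (zeros seen) $-$ (ones seen). The strategy is to commit to the next bit with probability $2^{-(\Delta+2)}$, and the key invariant, proved by a one-step induction on the number of remaining bits, is that from any state with $\Delta \ge 0$ the win probability is at least $\frac{1}{3}\bigl(1 - 2^{-(\Delta+1)}\bigr)$. The inductive step is a two-line verification in each of the two cases (next bit $= 1$, next bit $= 0$), and instantiating at $\Delta = 0$ gives the $\frac{1}{6}$ bound. No case split by sequence structure, no unspecified schedules, no estimation of $m$; the only quantity the strategy tracks is $\Delta$, and the commit probability $2^{-(\Delta+2)}$ is exactly the geometric decay that makes both branches of the induction close. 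If you want to salvage your plan, the crucial missing ingredient is a concrete rule with a provable invariant of this kind; the $\Delta$-tracking rule is that ingredient.
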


In the following, we first show how this result implies an online pen testing algorithm (with arbitrary order, optimum information), and then prove Lemma~\ref{lemma:bit-sampling}. We will actually prove a slightly more general result, which immediately implies Case~(2) of Theorem~\ref{thm:secretary-upper} and will be useful in later sections.

\begin{lemma}[Strengthening of Theorem~\ref{thm:secretary-upper}, Case~(2)]\label{lemma:secretary-arbitrary-approx-opt}
    In the secretary setting under arbitrary order, there is an algorithm that, given any $\Xmaxhat$ that lies in $[0, a_{[1]}]$, achieves a score of at least $\frac{\Xmaxhat}{O(\log n)}$ in expectation.
\end{lemma}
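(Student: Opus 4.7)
My approach reduces the problem to the bit sampling game of Lemma~\ref{lemma:bit-sampling} via a logarithmic-resolution bucketing based on the hint $\Xmaxhat$. First I would fix $k = \Theta(\log n)$ large enough that $(1/n)^{1/(k-1)} \ge 1/2$, and define thresholds $\theta_j \coloneqq (j-1)\Xmaxhat/k$ for $j \in [k]$ with gap $\Delta \coloneqq \Xmaxhat/k$. Let $N_j \coloneqq |\{i : X_i > \theta_j\}|$. Since the value realizing $a_{[1]} \ge \Xmaxhat > \theta_k$ passes every threshold, $N_k \ge 1$, so the telescoping identity $\prod_{j=1}^{k-1}(N_{j+1}/N_j) = N_k/N_1 \ge 1/n$ combined with AM--GM yields $\frac{1}{k-1}\sum_{j=1}^{k-1}(N_{j+1}/N_j) \ge (1/n)^{1/(k-1)} \ge 1/2$. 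In particular, there exists some level $j^* \in [k-1]$ with $N_{j^*+1}/N_{j^*} > 1/2$, which is the strict-majority condition needed by Lemma~\ref{lemma:bit-sampling}.

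Next, I would design a two-level testing scheme. After drawing a random level $j$ from $[k-1]$, I process each option as follows. Test it with threshold $\theta_j$; if the test fails, reject. If it passes, treat the option as a \emph{candidate} whose hidden bit equals $1$ if $X_i > \theta_{j+1}$ (``good'') and $0$ otherwise; this bit can be revealed by an additional test of size $\Delta$. The key pairing with Lemma~\ref{lemma:bit-sampling} is that ``observing'' a bit corresponds to performing the second test (which drains the option), while ``committing'' corresponds to accepting the candidate before the second test. When we commit, the remaining value is $X_i - \theta_j \ge \Delta$ precisely when the committed candidate is good. Conditional on $j = j^*$, the bit stream has a strict majority of $1$'s, so Lemma~\ref{lemma:bit-sampling} guarantees an $\Omega(1)$ chance of committing to a good candidate, yielding expected score $\Omega(\Delta) = \Xmaxhat/O(\log n)$ on this conditional event.

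The main obstacle will be avoiding a spurious additional $\log n$ factor when averaging over the unknown $j^*$: a uniformly random $j \in [k-1]$ hits $j^*$ only with probability $1/(k-1)$, giving the weaker bound $\Xmaxhat/O(\log^2 n)$. To close this gap, I would aim to argue (in the bit sampling analysis itself, as a strengthening of Lemma~\ref{lemma:bit-sampling}) that the win probability is $\Omega(r)$ whenever the fraction of $1$-bits is $r \in (0, 1]$, not just when $r > 1/2$. Combined with the AM--GM bound $\frac{1}{k-1}\sum_{j=1}^{k-1}(N_{j+1}/N_j) \ge 1/2$, such a refinement makes \emph{every} level $j$ contribute proportionally to its own good-fraction $r_j \coloneqq N_{j+1}/N_j$, so the expectation over uniform $j$ collapses to $\Omega(\Delta) = \Xmaxhat/O(\log n)$. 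The hard part will be establishing the linear-in-$r$ guarantee for the bit sampling game, since an adversary may concentrate all the $1$'s at the end of a stream whose length is unknown, forcing any commitment schedule to balance the risks of committing too early (missing the $1$'s) and too late (running off the end of the stream).
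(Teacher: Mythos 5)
Your high‑level framework matches the paper's: discretize into $k = \Theta(\log n)$ levels based on the hint $\Xmaxhat$, pick a random level $j$, feed the passing options into the bit‑sampling game of Lemma~\ref{lemma:bit-sampling} with ``good'' meaning value above the next level, and accept when the game commits. However, your argument has a genuine gap at the averaging step, and the fix you propose is provably false.

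The fix fails because a strategy for the bit‑sampling game cannot win with probability $\Omega(r)$ uniformly over the fraction of ones $r \in (0,1]$. Consider the family of streams $1$, $01$, $001$, \ldots, $\underbrace{0\cdots0}_{m-1}1$, \ldots, which share prefixes and hence are indistinguishable to the player before the single $1$ arrives. Winning on the stream of length $m$ requires committing to position $m$, and these commit events are mutually exclusive. If the strategy wins with probability $\ge c\cdot r_m = c/m$ on each, then summing gives $c\,H_n \le 1$, forcing $c = O(1/\log n)$. So the $\Omega(r)$ guarantee is impossible, and without it your argument only yields the weaker $O(\log^2 n)$ bound you correctly anticipated.

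The way the paper closes this gap is not by strengthening Lemma~\ref{lemma:bit-sampling}, but by a counting argument showing that a uniformly random level $j$ is ``majority‑good'' with probability $\ge 1/2$, so the original lemma (which only needs a strict majority of ones) applies on a constant‑probability event. Concretely: you already have the multiplicative identity $\prod_{j=1}^{k-1} N_{j+1}/N_j = N_k/N_1 \ge 1/n$; if at least $k/2$ indices had $N_{j+1}/N_j \le 1/2$, the product would be $\le 2^{-k/2}$, which is $< 1/n$ for $k = 2\lfloor\log_2 n\rfloor + 2$, a contradiction. Hence strictly fewer than $k/2$ of the $k-1$ levels fail the majority condition, so a random level satisfies it with probability $\ge 1/2$. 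Your AM--GM inequality $\frac{1}{k-1}\sum_j N_{j+1}/N_j \ge 1/2$ only extracts the arithmetic mean from the same product bound and does not control how many summands exceed $1/2$; using the product bound directly is both simpler and sufficient. With this counting step in place, the rest of your reduction to Lemma~\ref{lemma:bit-sampling} goes through exactly as you describe and gives the claimed $\Xmaxhat/O(\log n)$.
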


\begin{proof}[Proof of Lemma~\ref{lemma:secretary-arbitrary-approx-opt}]
Let $k = 2\lfloor\log_2n\rfloor + 2$, draw $j^*$ uniformly at random from $[k - 1]$, and set $\theta \coloneqq \frac{j^*-1}{k}\cdot \Xmaxhat$. For each $j \in [k]$, let $n_{\ge j} \coloneqq |\{i \in [n]: a_i > \frac{j - 1}{k}\cdot \Xmaxhat\}|$. Note that $n_{\ge 1} \le n$ and $n_{\ge k} \ge 1$.

We claim that with probability $\ge 1/2$, $j^*$ satisfies that $n_{\ge j^*} < 2n_{\ge j^*+1}$. Suppose otherwise, that there exist $k/2$ different values $1 \le j_1 < \cdots < j_{k/2} \le k - 1$ such that $n_{\ge j} \ge 2n_{\ge j + 1}$ holds for every $j \in \{j_1, \ldots, j_{k/2}\}$. This would give
\[n \ge n_{\ge 1} \ge 2^{k/2}n_{\ge k} \ge 2^{\lfloor\log_2n\rfloor + 1}\cdot 1 > n,\]
a contradiction. We condition on the event that $n_{\ge j^*} < 2n_{\ge j^* + 1}$ in the following.

Since $n_{\ge j^* + 1} > \frac{n_{\ge j^*}}{2}$, among the $n_{\ge j^*}$ options that could pass the test at $\theta$, strictly more than half of them would leave a remaining utility of at least $\frac{\Xmaxhat}{k}$. We call them the ``good'' options, and the other options (that pass the test but leave a utility $<\frac{\Xmaxhat}{k}$) the ``bad'' ones.

Now we simulate the player from Lemma~\ref{lemma:bit-sampling} on a hypothetical bit sampling instance with maximum sequence length $n$. 
We test the options one by one with the same threshold $\theta$. If an option $i$ passes the test, we check whether the player in the bit sampling problem commits to the next bit. If so, we accept option $i$; otherwise, we further test the option to check whether its initial value is above $\frac{j^* + 1}{k}\cdot \Xmaxhat$. If so, we feed a ``$1$'' to the bit sampling player and feed a ``$0$'' otherwise.

Assuming that $n_{\ge j^*} < 2n_{\ge j^* + 1}$ holds, the probability that we end up with a score $\ge \frac{\Xmaxhat}{k}$ is lower bounded by the player's winning probability, which is $\ge 1/6$ by Lemma~\ref{lemma:bit-sampling}. This lower bounds our expected score by $\frac{1}{2}\cdot\frac{1}{6}\cdot\frac{\Xmaxhat}{k}
=   \frac{\Xmaxhat}{O(\log n)}$ and proves the lemma.
\end{proof}

\paragraph{A winning strategy for bit sampling.} Our strategy for the bit sampling game crucially keeps track of the difference, denoted by $\Delta$, between the number of zeros and ones. Intuitively, when $\Delta = 0$, the adversary might be tempted to set the next bit to ``$1$'' and end the sequence. Thus, the player must commit with $\Omega(1)$ probability in order to ``catch'' this bit. On the other hand, the player can be less aggressive when $\Delta$ is large.

\begin{proof}[Proof of Lemma~\ref{lemma:bit-sampling}]
Consider the following strategy of the player:
\begin{itemize}
    \item Before seeing each bit, let $\Delta$ denote the number of zeros minus the number of ones, among all bits that have appeared so far.
    \item Commit to the next bit with probability $2^{-(\Delta+2)}$. With the remaining probability, observe the next bit and proceed.
\end{itemize}

We prove the following claim by an induction on $t$: Assuming that $n - t$ bits have been observed, and the number of zeros is higher than the number of ones by $\Delta \ge 0$, the player following the above strategy wins with probability at least $\frac{1}{3}\left(1 - 2^{-(\Delta+1)}\right)$. Applying this claim with $t = n$ and $\Delta = 0$ shows that the player wins with probability $\ge \frac{1}{6}$.

At the base case that $t = 0$, since the sequence contains more ones than zeros, $\Delta$ cannot be non-negative and there is nothing to prove. Now we proceed to the inductive step and assume that the claim holds for $t - 1$. Since currently the number of zeros is greater than or equal to the number of ones, the sequence is not over and there must be a next bit. First suppose that the next bit is $1$. With probability $2^{-(\Delta+2)}$, the player selects the next bit and wins the game. With the remaining probability, we reach the $t-1$ case where the difference between the bit counts is $\Delta - 1$. If $\Delta \ne 0$, we can apply the inductive hypothesis and lower bound the winning probability by $\frac{1}{3}\left(1 - 2^{-\Delta}\right)$; when $\Delta = 0$, the winning probability is trivially lower bounded by $0 = \frac{1}{3}\left(1 - 2^{-\Delta}\right)$. Thus, in either case, the overall winning probability is at least
\[
    2^{-(\Delta+2)} + \left(1 - 2^{-(\Delta+2)}\right)\cdot\frac{1}{3}\left(1 - 2^{-\Delta}\right)
=   \frac{1}{3}\left(1 - 2^{-(\Delta+1)} + 2^{-(2\Delta+2)}\right)
\ge \frac{1}{3}\left(1 - 2^{-(\Delta+1)}\right).
\]
Similarly, if the next bit is $0$, the algorithm wins with probability at least
\[
    \left(1 - 2^{-(\Delta+2)}\right)\cdot\frac{1}{3}\left(1 - 2^{-(\Delta+2)}\right)
\ge \frac{1}{3}\left(1 - 2^{-(\Delta+1)}\right).
\]
This completes the inductive step and proves the lemma.
\end{proof}

\subsection{An Improved Algorithm under Random Ordering, with Full Information}
We finish the section by proving the upper bound part of Theorem~\ref{thm:secretary-improved}, which shaves a $\log\log n$ factor off the competitive ratio in the random order, full information case. Note that given $a_1, a_2, \ldots, a_n$ and having observed $X_1$ through $X_i$, the player knows the $n - i$ unseen values up to a random permutation, namely, $\{X_{i+1}, X_{i+2}, \ldots, X_n\} = \{a_1, a_2, \ldots, a_n\}\setminus\{X_1, X_2, \ldots, X_i\}$.

We say that a ``gap'' appears in $\{X_{i+1}, X_{i+2}, \ldots, X_n\}$ if, for some $A < B$, the set contains at least one element larger than $B$, but nothing in interval $(A, B]$. The key observation is that, whenever such a gap appears, the player can secure a score of $B - A$ by testing the remaining options at threshold $A$ and accepting the first one that passes the test, as it must leave a remaining utility of at least $B - A$. Our proof of Theorem~\ref{thm:secretary-improved} essentially shows that with a good probability, a gap of size $\approx \frac{\log\log n}{\log n}\cdot a_{[1]}$ appears at some point $i$.

\begin{proof}[Proof of Theorem~\ref{thm:secretary-improved} (Upper Bound)]
Let $k \ge 1$ be an integer to be determined later. For each $j \in [k]$, define $n_j \coloneqq |\{i \in [n]: \frac{j - 1}{k}a_{[1]} < a_i \le \frac{j}{k}a_{[1]}\}|$ and let $n_{\ge j} \coloneqq n_j + n_{j+1} + \cdots + n_k$ be the number of options with utility $> \frac{j - 1}{k} a_{[1]}$.

Before testing each option $i$, we review the multiset of unseen values. If, for some $j \in \{1, \ldots, k-1\}$, none of the values lies in $\left(\frac{j - 1}{k}a_{[1]}, \frac{j}{k}a_{[1]}\right]$ and at least one of them is higher than $\frac{j}{k}a_{[1]}$, we test each remaining option at threshold $\frac{j-1}{k}a_{[1]}$ and accept the first one that passes the test. Otherwise, we test option $i$ at threshold $+\infty$ (so that we see $X_i$) and reject it. Note that if the condition holds for any $j$ at any point $i$, the player receives a score of at least $a_{[1]}/k$. In the remainder of the proof, we show that this is indeed the case with probability $\Omega(1)$, for some carefully chosen $k$.

Let $\E_j$ denote the event that, among the $n_{\ge j}$ options with value $> \frac{j - 1}{k}a_{[1]}$, the one that appears last has value $> \frac{j}{k}a_{[1]}$. Note that event $\E_j$ implies that our algorithm would detect this gap of size $\frac{a_{[1]}}{k}$, and thus secure a score $\ge\frac{a_{[1]}}{k}$. Given that the options are randomly ordered, $\E_j$ happens with with probability exatly $\frac{n_{\ge j + 1}}{n_{\ge j}}$. Furthermore, it can be verified that $\E_1, \E_2, \ldots, \E_{k-1}$ are independent. Thus, the probability that none of the events happens is
\[
    \prod_{j=1}^{k-1}\left(1 - \frac{n_{\ge j+1}}{n_{\ge j}}\right)
=   \exp\left((k-1)\cdot\frac{1}{k-1}\sum_{j=1}^{k-1}\ln\left(1 - \frac{n_{\ge j+1}}{n_{\ge j}}\right)\right).
\]
Note that $\prod_{j=1}^{k-1}\frac{n_{\ge j+1}}{n_{\ge j}} = \frac{n_{\ge k}}{n_{\ge 1}} \ge \frac{1}{n}$. By the AM-GM inequality,
\[
    \frac{1}{k-1}\sum_{j=1}^{k-1}\frac{n_{\ge j+1}}{n_{\ge j}}
\ge \left(\prod_{j=1}^{k-1}\frac{n_{\ge j+1}}{n_{\ge j}}\right)^{\frac{1}{k-1}}
\ge n^{-1/(k-1)}.
\]
Then, using $\ln(1-x)\le -x$, we have
\[
    \frac{1}{k-1}\sum_{j=1}^{k-1}\ln\left(1 - \frac{n_{\ge j+1}}{n_{\ge j}}\right)
\le -\frac{1}{k-1}\sum_{j=1}^{k-1}\frac{n_{\ge j+1}}{n_{\ge j}}
\le -n^{-1/(k-1)},
\]
and the probability that none of $\E_1, \E_2, \ldots, \E_{k-1}$ happens is at most $\exp\left(-(k-1)\cdot n^{-1/(k-1)}\right)$.

We pick $k = \Theta\left(\frac{\log n}{\log\log n}\right)$ such that $(k-1)\cdot n^{-1/(k-1)} \ge 1$, which guarantees that the algorithm ends up with score $\ge \frac{a_{[1]}}{k}$ with probability $\ge 1 - e^{-1}$. The expected utility is thereby at least $(1 - e^{-1})\cdot\frac{a_{[1]}}{k} = \Omega\left(\frac{\log\log n}{\log n}\cdot a_{[1]}\right)$. In other words, the algorithm is $O\left(\frac{\log n}{\log\log n}\right)$-competitive.
\end{proof}

\section{Lower Bounds for the Secretary Setting}\label{sec:secretary-lower}
In this section, we prove all the lower bounds for the secretary setting. We start with the random order, full information case in Theorem~\ref{thm:secretary-improved}. We then prove the second case of Theorem~\ref{thm:secretary-lower} (arbitrary order, full information) using a similar sequence $(a_i)_{i=1}^n$, but under a harder distribution over permutations of the values. Finally, we prove Case~(1) of Theorem~\ref{thm:secretary-lower}, essentially by reducing the setting to the i.i.d.\ case of the prophet setting, which has a simple lower bound construction based on the exponential distribution (Fact~\ref{fact:lower}).

\subsection{Random Order, Full Information}
Fix integer $k \ge 1$ and let $n = 2^{k+1} - 1$. We consider the sequence $a_1, a_2, \ldots, a_n$ in which each $j \in \{0, 1, \ldots, k\}$ appears $2^{k-j}$ times. We will show that, when $a_1, a_2, \ldots, a_n$ arrive in a uniformly random order, the highest expected score that the player can achieve is $O(\log k)$. This would establish the $\frac{k}{O(\log k)} = \Omega\left(\frac{\log n}{\log\log n}\right)$ lower bound on the competitive ratio.

We first note that the optimal algorithm for this specific instance should satisfy a few constraints.
\begin{remark}\label{remark:opt-alg}
Since all the values are integers between $0$ and $k$ in this problem instance, we can assume without loss of generality that each threshold $\theta_i$ chosen by the player is among $\{0, 1, \ldots, k\}$. Indeed, picking a threshold higher than $k$ is equivalent to picking $\theta_i = k$, and a non-integral threshold $\theta_i$ gives the same information as $\lfloor\theta_i\rfloor$, but leaves a smaller remaining value. Under this assumption, the score of the player always lies in $\{0, 1, \ldots, k\}$.
\end{remark}

The following lemma states that, for any $\Delta \in \{1, 2, \ldots, k\}$, the probability that the player gets a score $\ge \Delta$ is exponentially small in $\Delta$.

\begin{lemma}\label{lemma:secretary-improved-lower}
    In the instance defined as above, for any $\Delta \in [k]$ and $\theta \in \{0, 1, \ldots, k - \Delta\}$, the probability that the player gets a score of $\ge \Delta$ after accepting an option that has been tested at $\theta_i = \theta$ is at most $4\cdot 2^{-\Delta}$.
\end{lemma}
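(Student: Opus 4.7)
The plan is to decompose the probability of the bad event by the acceptance step, and then to analyze the conditional distribution of $X_i$ at that step. Writing
\[
\Pr[\text{accept with } \theta_T = \theta,\ X_T \ge \theta + \Delta] = \sum_{i=1}^n \mathbb{E}\left[\mathbbm{1}\{\theta_i(\mathcal{H}_i) = \theta,\ \text{accept on pass}\} \cdot \Pr[X_i \ge \theta + \Delta \mid \mathcal{H}_i]\right],
\]
the main work becomes bounding the inner conditional probability $\Pr[X_i \ge \theta + \Delta \mid \mathcal{H}_i = h]$ for any history $h$ that leads the strategy to accept at threshold $\theta$ on step $i$. The events for different $i$ are disjoint, so the outer sum is bounded by $\Pr[T < \infty, \theta_T = \theta] \le 1$ weighted by the conditional bound.

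Given $\mathcal{H}_i = h$, the history records exact values from failed tests on positions $j < i$ together with lower-bound constraints $X_j > \theta_j$ from passed tests. The unknown multiset $M'(h)$ is the original multiset with the revealed values removed, and $X_i$ is distributed as the value at position $i$ of a uniformly random assignment of $M'(h)$ to the unseen positions subject to the pass-constraints. A direct combinatorial count of compatible assignments gives
\[
\frac{\Pr[X_i \ge \theta + \Delta \mid \mathcal{H}_i = h]}{\Pr[X_i > \theta \mid \mathcal{H}_i = h]} = \frac{U_{HH}(h) - |P_+(h)|}{U_H(h) - |P_+(h)|},
\]
where $U_H(h)$ and $U_{HH}(h)$ count values $> \theta$ and $\ge \theta + \Delta$ in $M'(h)$, and $|P_+(h)|$ is the number of passed tests in $h$ at threshold $\ge \theta + \Delta - 1$. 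Substituting the multiplicities $m_v = 2^{k-v}$ of the original multiset gives, in the undisturbed case, the ratio $(2^{k-\theta-\Delta+1}-1)/(2^{k-\theta}-1) \le 2 \cdot 2^{-\Delta}$; the denominator shrinks only when the strategy's failed tests have consumed ``middle'' values (those in $(\theta, \theta + \Delta)$), and I will argue that this consumption can at most double the ratio, giving the per-history bound of $4 \cdot 2^{-\Delta}$.

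The main obstacle is handling adversarial histories where the strategy has consumed a disproportionate number of middle values to bias the ratio upward. The key insight is a self-limiting property: revealing a middle value requires a failed test at threshold $\ge \theta + 1$, and given such a failure the probability that the revealed value is middle rather than low is $\le (2^{k-\theta} - 2^{k-\theta-\Delta+1})/(2^{k+1} - 2^{k-\theta-\Delta+1})$, which is small due to the exponential abundance of low values. Integrating over histories weighted by their likelihood, the contribution to $\Pr[\text{bad}]$ from histories that violate the $4 \cdot 2^{-\Delta}$ per-history bound is itself $O(2^{-\Delta})$ by a concentration argument on the count of consumed middles, so combining the two cases and summing gives $\Pr[\text{bad}] \le 4 \cdot 2^{-\Delta}$.
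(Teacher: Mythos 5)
Your approach and the paper's are genuinely different, and yours has a real gap at its heart. After the standard WLOG reduction (player either ``commits'' by testing at $\theta$, or ``observes'' by testing at $k$), the history at step $i$ is the exact multiset $\{X_1,\ldots,X_{i-1}\}$, so the per-history ratio you want to bound reduces to $\frac{g_i}{g_i + b_i}$, where $g_i$ (resp.\ $b_i$) counts remaining options with value $\ge \theta+\Delta$ (resp.\ in $(\theta, \theta+\Delta)$). This ratio is \emph{not} pointwise bounded by $4\cdot 2^{-\Delta}$: if the player's earlier observations happen to have consumed all $b$ middle values, then $\frac{g_i}{g_i+b_i}=1$. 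You acknowledge this (``adversarial histories'') and propose a concentration argument on the count of consumed middles, but you never explain how to control a strategy that observes indefinitely, refusing to commit until the middles are depleted; that strategy's $n$ chances at observation are more than enough to drain the exponentially scarce middles with nonnegligible probability when $\theta$ is large. Even granting a correct concentration bound, your final step (``combining the two cases and summing gives $\le 4\cdot 2^{-\Delta}$'') doesn't close: you'd have $4\cdot 2^{-\Delta}$ from the good histories \emph{plus} $O(2^{-\Delta})$ from the bad ones, i.e.\ a worse constant than claimed. Separately, the combinatorial formula $\frac{U_{HH}-|P_+|}{U_H-|P_+|}$ you posit for the conditional ratio (when there are pass-constraints at various thresholds) is asserted, not derived, and is not obviously exact.

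The paper avoids all of this with a backward induction on the number $t$ of remaining options: letting $b,g$ be the remaining bad/good counts, it shows that the success probability is always at most $\frac{g}{b+g}$, and that both the \emph{observe} and the \emph{commit} move preserve this invariant in expectation (the quantity $\frac{g}{b+g}$ is a martingale under the reveal process). This reduces the lemma to the single clean computation $\frac{G}{B+G} \le 4\cdot 2^{-\Delta}$ on the initial counts, with no need to reason about which particular values a history has revealed or to split histories into cases. If you want to salvage your route, the missing idea is precisely this martingale/optional-stopping structure: rather than bounding the conditional ratio pointwise, show its expectation over the acceptance time equals its initial value.
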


We first show, via a straightforward calculation, that the lower bound in Theorem~\ref{thm:secretary-improved} immediately follows from the lemma.

\begin{proof}[Proof of Theorem~\ref{thm:secretary-improved} (Lower Bound)]
For each $\Delta \in [k]$, Lemma~\ref{lemma:secretary-improved-lower} together with a union bound over $\theta \in \{0, 1, \ldots, k - \Delta\}$ shows that the probability of achieving score $\ge \Delta$ is at most $(k - \Delta + 1)\cdot 4\cdot 2^{-\Delta} \le 4k \cdot 2^{-\Delta}$.
Therefore, the expected score of the player is upper bounded by
\[
    \sum_{\Delta=1}^{k}\pr{}{\text{score} \ge \Delta}
\le \sum_{\Delta=1}^{+\infty}\min\{1, 4k\cdot 2^{-\Delta}\}
=   O(\log k).
\]
Since $a_{[1]} = k$, the competitive ratio is at least $\frac{k}{O(\log k)} = \Omega\left(\frac{\log n}{\log\log n}\right)$.
\end{proof}

Now we prove Lemma~\ref{lemma:secretary-improved-lower}.

\begin{proof}[Proof of Lemma~\ref{lemma:secretary-improved-lower}]
    Fix $\Delta \in [k]$ and an integer $\theta$ between $0$ and $k - \Delta$. We say that the player ``wins'', if it gets a score of at least $\Delta$ after testing the accepted option with threshold $\theta$. We may assume without loss of generality that the player, to maximize its probability of winning, uses either of the following two strategies at each option $i$:
    \begin{itemize}
        \item \textbf{(Observe)} The player observes $X_i$ by testing the option at $\theta = k$ and then rejecting it.
        \item \textbf{(Commit)} The player tests $X_i$ at $\theta_i = \theta$, and accepts if the test passes.
    \end{itemize}
    Indeed, using any threshold $\theta_i \ne \theta$ does not count towards the winning probability; the player may as well observe $X_i$ perfectly by picking $\theta_i = k$.
    
    Among the options that could pass a test at $\theta$, we say that an option is ``good'' if its value is at least $\theta + \Delta$, and ``bad'' otherwise. There are $G \coloneqq \sum_{j=\theta + \Delta}^{k}2^{k-j}$ good options and $B \coloneqq \sum_{j=\theta+1}^{\theta+\Delta-1}2^{k-j}$ bad ones.
    
    We will prove by induction on $t$ that, when there are exactly $t$ remaining options and they contain $b$ bad ones and $g$ good ones ($b + g \le t$), the probability of accepting a good option is at most $\frac{g}{b + g}$ (interpreted as $0$ if $b = g = 0$).
    
    At the base case $t = 0$, we can only have $b = g = 0$ and the claim clearly holds. Suppose that the claim holds at $t' = t - 1$, and we fix $b, g$ such that $b + g \le t$. First assume that the player chooses to \emph{commit}. With probability $1 - \frac{b + g}{t}$, the next option fails the test at $\theta$, and the player moves on. The conditional winning probability is still $\le \frac{g}{b + g}$ by the inductive hypothesis. With probability $\frac{g}{t}$, the next option is good and is accepted by the player, in which case the player wins. Similarly, the player accepts a bad option and loses with probability $\frac{b}{t}$. Thus, the overall winning probability is upper bounded by
    \[
        \left(1 - \frac{b + g}{t}\right)\cdot\frac{g}{b + g} + \frac{g}{t}\cdot 1 + \frac{b}{t} \cdot 0
    =   \frac{g}{b + g}.
    \]
    
    If the player decides to \emph{observe}, a similar reasoning shows that the player wins with probability at most
    \[
        \left(1 - \frac{b + g}{t}\right)\cdot\frac{g}{b + g} + \frac{g}{t}\cdot \frac{g - 1}{(g - 1) + b} + \frac{b}{t} \cdot \frac{g}{g + (b - 1)}
    =   \frac{g}{b + g}.
    \]
    
    Therefore, applying the claim at $t = n$, $b = B$ and $g = G$ shows that the probability in question is at most
    \[
        \frac{G}{B + G}
    =   \frac{\sum_{j=\theta + \Delta}^{k}2^{k-j}}{\sum_{j=\theta + 1}^{k}2^{k-j}}
    \le \frac{2\cdot 2^{k-(\theta + \Delta)}}{2^{k-(\theta + 1)}}
    =   4\cdot2^{-\Delta}.
    \]
\end{proof}

\subsection{Arbitrary Order, Full Information}\label{sec:non-uniform-order}
Now we move on to the worst-case order setting, and try to strengthen the previous lower bound to $\Omega(\log n)$. We will actually construct a hard distribution over permutations of some fixed sequence. In light of the $O\left(\frac{\log n}{\log\log n}\right)$ upper bound in Theorem~\ref{thm:secretary-improved}, this hard distribution cannot be uniform.

Let $k \ge 1$ be an integer. We generate an instance with $n = 4^0 + 4^1 + \cdots + 4^k$ options in total by repeating the following two steps:
\begin{itemize}
    \item Sample $X$ from the geometric distribution that takes each value $j \in \{0, 1, 2, \ldots\}$ with probability $2^{-(j+1)}$.
    \item If $X \le k$ and $X$ has appeared less than $4^{k-X}$ times in the sequence, we append value $X$ to the sequence.
\end{itemize}
Clearly, the resulting sequence is always a permutation of the length-$n$ sequence in which each $j \in \{0, 1, \ldots, k\}$ appears exactly $4^{k-j}$ times.

By the same argument as in Remark~\ref{remark:opt-alg}, we may assume that player picks each threshold $\theta_i$ from $\{0, 1, \ldots, k\}$, so that the resulting score is always among $\{0, 1, \ldots, k\}$. The key step of the lower bound is the following lemma.

\begin{lemma}\label{lemma:secretary-lower}
    In the instance defined as above, for any integer $\Delta \ge 3$, the probability of getting a score of exactly $\Delta$ is at most $c\cdot 2^{-\Delta}$, where $c = 21$ is a universal constant.
\end{lemma}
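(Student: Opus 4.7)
The plan is to bound $\Pr[\mathrm{score} = \Delta]$ by controlling the conditional distribution of each $X_t$ given its history, and then aggregating across the player's (adaptive) commit choice. Let $Q(S_t) \subseteq \{0, 1, \ldots, k\}$ denote the set of values with positive remaining quota after $t - 1$ appendages, and let $j_0(S_t) = \min Q(S_t)$. From the rejection-sampling definition of the instance, $\Pr[X_t = v \mid S_t] = 2^{-(v+1)} / \sum_{v' \in Q(S_t)} 2^{-(v'+1)}$ for $v \in Q(S_t)$. Since $j_0(S_t) \in Q(S_t)$ contributes at least $2^{-(j_0+1)}$ to the normalizer, this yields the uniform bound $\Pr[X_t = v \mid S_t] \le 2^{j_0(S_t) - v}$ for every $v \ge j_0(S_t)$.

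For the player to achieve a score of exactly $\Delta$, they must commit at some step $t$ with threshold $\theta_t \ge 0$ (integer-valued by Remark~\ref{remark:opt-alg}) and observe $X_t = \theta_t + \Delta$; the conditional probability of this event is at most $2^{j_0(S_t) - \theta_t - \Delta} \le 2^{j_0(S_t) - \Delta}$. Since the player commits at most once (so $\sum_t \mathbf{1}[\text{commit at }t] \le 1$) and $j_0(S_t)$ is non-decreasing in $t$ (quotas only shrink), summing across possible commit times gives
\[
    \Pr[\mathrm{score} = \Delta] \;\le\; 2^{-\Delta} \cdot \mathbb{E}\bigl[2^{J^*}\bigr],
\]
where $J^* \coloneqq \max_t j_0(S_t)$ is the largest $j_0$ ever attained, which coincides with the value of the \emph{last} appended option in the sequence.

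It then suffices to show $\mathbb{E}[2^{J^*}] \le 21$. I would prove this via the canonical ``exponential-clock'' coupling: for each $j$, treat the arrivals of type $j$ as a Poisson process of rate $w_j = 2^{-(j+1)}$, and let $T_j$, the time of the $q_j = 4^{k-j}$-th type-$j$ arrival, be $\mathrm{Gamma}(q_j, w_j)$ with the $(T_j)_{j=0}^k$ mutually independent. A standard thinning argument identifies the order in which quotas get filled in our rejection process with the order of the $T_j$'s, so $J^* = \arg\max_j T_j$. Chernoff/MGF bounds for Gamma tails, together with the fact that $q_0 = 4^k$ dwarfs $q_{j^*} = 4^{k - j^*}$ and that $w_{j^*}/w_0 = 2^{-j^*}$ for each $j^* \ge 1$, yield $\Pr[T_{j^*} > T_0] \le (1 + 2^{-j^*})^{-4^k}$ (or a similar doubly-exponential estimate); plugging this into $\sum_{j^* = 1}^{k} 2^{j^*} \Pr[T_{j^*} > T_0]$ produces a geometric sum that is bounded by an absolute constant uniformly in $k$.

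The main obstacle will be this final quantitative tail computation: one must verify that the $2^{j^*}$ factor is indeed swamped by the Gamma-tail decay for every admissible $k$. Helpfully, since scores lie in $\{0, 1, \ldots, k\}$, the claim is vacuous whenever $k < \Delta$, so we may assume $k \ge \Delta \ge 3$; then $q_0 = 4^k \ge 64$ is already large enough for the doubly-exponential decay in $k$ to dominate $2^{j^*} \le 2^k$, producing the claimed universal constant $c = 21$.
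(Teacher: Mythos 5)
Your approach is genuinely different from the paper's: you replace the paper's risky/safe dichotomy with a single per-step bound $\Pr[X_t = v \mid S_t] \le 2^{j_0(S_t)-v}$ and a potential-style aggregation through $\mathbb{E}[2^{J^*}]$, where $J^*$ is the value of the last appended option; the Gamma-process coupling to control $\mathbb{E}[2^{J^*}]$ is a nice idea and would establish that quantity is $1+o(1)$. Unfortunately, the aggregation step has a genuine gap.

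The line ``Since the player commits at most once \ldots summing across possible commit times gives $\Pr[\mathrm{score}=\Delta]\le 2^{-\Delta}\mathbb{E}[2^{J^*}]$'' conflates two things. What is true is that the player \emph{accepts} at most once. But your per-step bound $\Pr[X_t = \theta_t+\Delta\mid S_t]\le 2^{j_0(S_t)-\theta_t-\Delta}$ is a bound on the \emph{unconditional} distribution of $X_t$ given the $\sigma$-algebra $\mathcal{F}_t$ of observations before step $t$; the event that is $\mathcal{F}_t$-measurable and pairs up with this bound is ``player reaches step $t$ and plays a commit action at threshold $\theta_t$,'' not ``player accepts at step $t$.'' A player may play commit actions at many steps: whenever $X_t\le\theta_t$ the test fails, the player observes $X_t$, and the game continues. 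So the indicators $\mathbf{1}[\text{commit action at }t]$ need not sum to $\le1$, and the best you get from your chain of inequalities is
\[
\Pr[\mathrm{score}=\Delta]\;\le\;2^{-\Delta}\,\mathbb{E}\!\left[\sum_t \mathbf{1}[\text{reach }t,\ \text{commit action at }t]\cdot 2^{j_0(S_t)-\theta_t}\right],
\]
and the sum inside can be as large as the number of commit actions, with no a priori bound. Conversely, if you try to make the indicators sum to $\le1$ by conditioning on \emph{acceptance}, you lose the bound: given $S_t$ \emph{and} $X_t>\theta_t$, the conditional probability of $X_t=\theta_t+\Delta$ is $2^{-(\theta_t+\Delta+1)}/\sum_{v>\theta_t,\,v\in Q(S_t)}2^{-(v+1)}$, which can equal $1$ when $\theta_t+\Delta$ is the only active value above $\theta_t$ — and nothing in your argument rules that out.

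This is exactly the configuration the paper's proof is built to handle. The paper splits acceptances into ``risky'' ($\theta_i+1$ still active) and ``safe'' ($\theta_i+1$ exhausted). For a risky acceptance the conditional-on-acceptance win probability is $\le 2^{1-\Delta}$ because $\theta_i+1$ contributes to the normalizer, and the disjointness of acceptance events then gives a clean $2^{1-\Delta}$ bound; for a safe acceptance the conditional win probability can be $1$, but the enabling event $\E_\theta$ (the last value-$(\theta+\Delta)$ survives past the last value-$(\theta+1)$) is a property of the instance alone and is shown to be doubly exponentially rare, with a union bound over $\theta$ independent of the player's strategy. Your single-$j_0$ bound does not see this distinction; to repair your argument you would need some analogue of the risky/safe split (or a different bound that holds \emph{after} conditioning on $X_t>\theta_t$), and at that point the Gamma coupling plays a role very similar to the paper's Chernoff bound on $\E_\theta$ rather than replacing the risky-case analysis.
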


Compared to Lemma~\ref{lemma:secretary-improved-lower}, the lemma above is stronger in that it bounds the total probability of getting a high score, even after a union bound over all the possible thresholds $\theta$, by an exponentially small quantity in $\Delta$.

Assuming Lemma~\ref{lemma:secretary-lower}, the second case of Theorem~\ref{thm:secretary-lower} follows from a simple calculation:
\begin{proof}[Proof of Theorem~\ref{thm:secretary-lower}, Case (2)] The expected score of any algorithm is upper bounded by
\[
    \sum_{\Delta = 1}^{k}\Delta\cdot\pr{}{\textrm{score} = \Delta}
\le 3 + c\cdot\sum_{\Delta = 3}^{+\infty}\Delta\cdot 2^{-\Delta}
=   O(1),
\]
where the second step follows from Lemma~\ref{lemma:secretary-lower}. This proves the $\frac{k}{O(1)} = \Omega(\log n)$ lower bound.
\end{proof}

\begin{proof}[Proof of Lemma~\ref{lemma:secretary-lower}]
Fix $\Delta \ge 3$. We say that the player \emph{wins} if its score is exactly $\Delta$, and assume that the player tries to maximize its probability of winning according to this criterion. When the player accepts an option $i$ after testing it with threshold $\theta_i$, we say that this acceptance is \emph{risky} if at least one of the unseen options has value $\theta_i + 1$; the acceptance of option $i$ is called \emph{safe} otherwise.

In the following, we will show that the probability that the player wins (i.e., obtain score exactly $\Delta$) by accepting an option riskily is at most $2^{1 - \Delta}$, while the probability of a safe acceptance is also small (in fact, doubly exponential in $\Delta$). The lemma would then follow easily.

\paragraph{Probability of a risky win.} We condition on the event that the player riskily accepts option $i$ after testing it with threshold $\theta_i$, as well as the realization of $X_1, X_2, \ldots, X_{i-1}$. For each $j \in \{0, 1, \ldots, k\}$, we say that $j$ is \emph{active}, if $j$ appears at least once in $X_i, X_{i+1}, \ldots, X_n$. Clearly, the definition of a risky acceptance implies that $\theta_i + 1$ is active.

Our construction of the sequence implies that, conditioning on $X_1, X_2, \ldots, X_{i-1}$, the next value $X_i$ is distributed over all active values $j$, with the probability of $X_i = j$ proportional to $2^{-j}$.
Therefore, given that $\theta_i + 1$ is active, the conditional probability of the event $X_i = \theta_i + \Delta$ is upper bounded by $\frac{2^{-(\theta_i + \Delta)}}{2^{-(\theta_i + 1)} + 2^{-(\theta_i + \Delta)}} \le 2^{1-\Delta}$. This means that conditioning on a risky acceptance, the probability of winning the game (i.e., exactly getting score $\Delta$) is at most $2^{1 - \Delta}$. Consequently, the probability of a risky win is at most $2^{1 - \Delta}$.

\paragraph{Probability of a safe win.}
For $\theta \in \{0, 1, \ldots, k - \Delta\}$, we call the options with value $\theta + 1$ the ``$\theta$-bad'' ones, and those with value $\theta + \Delta$ the ``$\theta$-good'' ones. Let $\E_\theta$ be the event that the last $\theta$-good option appears after the last $\theta$-bad option. For the player to get score $\Delta$ by safely accepting option $i$, it must be the case that $X_i = \theta_i + \Delta$, while $\theta_i + 1$ never appears in $X_{i+1}, X_{i+2}, \ldots, X_n$. In other words, event $\E_{\theta_i}$ must happen. Therefore, we can upper bound the probability of a safe win by controlling the probability of each $\E_\theta$.

Let $G \coloneqq 4^{k - (\theta + \Delta)}$ and $B \coloneqq 4^{k - (\theta + 1)}$ denote the total numbers of $\theta$-good and -bad options. We focus on the length-$(B+G)$ subsequence of $X_1, X_2, \ldots, X_n$ consisting of only $\theta + 1$ and $\theta + \Delta$. Our construction of $(X_i)_{i=1}^n$ guarantees that this subsequence follows the same distribution as the output of the following procedure:
\begin{itemize}
    \item Sample $Y_1, Y_2, \ldots, Y_{B+G}$ independently from the Bernoulli distribution with mean $\mu = \frac{2^{-\Delta}}{2^{-1} + 2^{-\Delta}}$.
    \item For each $i = 1, 2, \ldots, B+G$, if $Y_i = 0$, append $\theta + 1$ to the end of the sequence, and append $\theta + \Delta$ otherwise. Repeat this step until either $\theta + 1$ has appeared $B$ times, or $\theta + \Delta$ has appeared $G$ times.
    \item In the former case, append $\theta + \Delta$ to the end of the sequence until the sequence has length $B + G$; append $\theta + 1$ in the latter case. 
\end{itemize}

Note that $\E_\theta$ is exactly the event that the last entry of the length-$(B+G)$ sequence is $\theta + \Delta$. For this to happen, we must have $Y_1 + Y_2 + \cdots + Y_{B+G} \le G$, which, by a Chernoff bound, happens with probability at most
\[
    \exp\left(-\frac{1}{2}\cdot\left(1 - \frac{G}{\mu(B + G)} \right)^2\cdot \mu(B + G)\right).
\]
Recall that $B = 4^{k-(\theta + 1)}$, $G = 4^{k-(\theta + \Delta)}$ and $\mu = \frac{2^{-\Delta}}{2^{-1} + 2^{-\Delta}}$. We have
$\frac{G}{\mu(B + G)} = \frac{2^{-1} + 2^{-\Delta}}{2^{\Delta}(4^{-1} + 4^{-\Delta})} \le \frac{1}{2}$, where the last step holds for $\Delta \ge 3$. This simplifies the bound into
\[
    \pr{}{\E_{\theta}}
\le \exp\left(-\frac{\mu (B+G)}{8}\right)
\le \exp\left(-\frac{2^{-\Delta}\cdot 4^{k-\theta-1}}{8}\right)
\le \exp\left(-\frac{2^{k-\theta}}{32}\right).
\]
The last step above holds since $k - \theta \ge \Delta$.
The probability that at least one of $\E_\theta$ happens is then upper bounded by
\[
    \sum_{\theta=0}^{k-\Delta}\pr{}{\E_\theta}
\le \sum_{\theta=0}^{k-\Delta}\exp\left(-\frac{2^{k-\theta}}{32}\right)
=   \sum_{j = \Delta}^k\exp\left(-\frac{2^j}{32}\right).
\]

In total, the winning probability of the player (either risky or safe) is at most
\[2^{1-\Delta} + \sum_{j = \Delta}^k\exp\left(-\frac{2^j}{32}\right),\]
which is at most $21\cdot 2^{-\Delta}$ for any $\Delta \ge 3$.
\end{proof}

\subsection{Random Order, Optimum Information}
Finally, we prove the first case of Theorem~\ref{thm:secretary-lower}: The competitive ratio is still $\Omega(\log n)$, even if the options arrive in a random order and we are given the maximum value. We prove the lower bound by considering a distribution over $(a_1, a_2, \ldots, a_n)$. The distribution is chosen such that the expectation of $a_{[1]}$ is at least $\Omega(\log n)$, while any pen testing algorithm, when given $a_{[1]}$ and running on a random permutation of $(a_i)_{i=1}^n$, achieves an $O(1)$ score in expectation.

Let $\D$ be the distribution of $\min\left\{X, \frac{\ln n}{2}\right\}$ when $X$ is drawn from the exponential distribution with parameter $1$. In other words, we truncate the exponential by moving all the probability mass from its tail $\left[\frac{\ln n}{2}, +\infty\right)$ to a point mass at $\frac{\ln n}{2}$. We draw $a_1, a_2, \ldots, a_n$ from $\D$ independently. Intuitively, one of the $a_1, a_2, \ldots, a_n$ would take value $\frac{\ln n}{2}$ with high probability, so the player gains little information from $a_{[1]}$. Without this additional information, the player is essentially in the i.i.d.\ prophet setting, working on the lower bound instance from Fact~\ref{fact:lower}.

\begin{proof}[Proof of Theorem~\ref{thm:secretary-lower}, Case (1)]
Let $a_1, a_2, \ldots, a_n$ be drawn independently from $\D$. We first show that w.h.p., $a_{[1]} = \frac{\ln n}{2}$. Indeed, the probability of never getting a sample $\frac{\ln n}{2}$ is given by:
\[
    \left(1 - e^{-\frac{\ln n}{2}}\right)^n
=   \left(1 - \frac{1}{\sqrt{n}}\right)^n
\le e^{-\sqrt{n}}.
\]
Therefore, the expectation of $a_{[1]}$ is at least $(1 - e^{-\sqrt{n}})\cdot\frac{\ln n}{2} = \Omega(\log n)$.

Then, we argue that no player could get an expected score strictly higher than $1 + e^{-\sqrt{n}}\cdot\frac{\ln n}{2} = O(1)$, when the value of $a_{[1]}$ is given and $(X_i)_{i=1}^n$ is chosen as a random permutation of $(a_i)_{i=1}^n$. Here, the expectation is over the choice of $(a_i)_{i=1}^n$, the random arrival order, as well as the randomness in the player itself. Suppose that this is not true. Then, we consider running the same algorithm on the same distribution over problem instances, except that the player is always given value $\frac{\ln n}{2}$ instead of $a_{[1]}$. Since $a_{[1]} \ne \frac{\ln n}{2}$ only happens with probability $e^{-\sqrt{n}}$, and the score of the player is always between $0$ and $\frac{\ln n}{2}$, the expected score of the player decreases by at most $e^{-\sqrt{n}} \cdot \frac{\ln n}{2}$, and is thus still strictly higher than $1$.

However, this is impossible:
After a random permutation, $X_1, \ldots, X_n$ are still $n$ independent samples from $\D$. Then, whenever the player accepts an option $i$ after testing it at some $\theta_i < \frac{\ln n}{2}$, the expected score is upper bounded by
\begin{align*}
    \Ex{X_i \sim \D}{X_i - \theta_i | X_i > \theta_i}
&=  \Ex{X \sim \Exp(1)}{\left.\min\left\{X, \frac{\ln n}{2}\right\} - \theta_i \right| \min\left\{X, \frac{\ln n}{2}\right\} > \theta_i}\\
&\le\Ex{X \sim \Exp(1)}{X - \theta_i | X > \theta_i}
=   1.
\end{align*}
This gives a contradiction. 

Therefore, when $a_1, a_2, \ldots, a_n$ are drawn i.i.d.\ from $\D$, the expected optimum is $\Omega(\log n)$ whereas the expected score of any algorithm is $O(1)$. By an averaging argument, for every algorithm there exists (deterministic) $a_1, a_2, \ldots, a_n$ on which the player's competitive ratio is $\Omega(\log n)$.
\end{proof}

\section{A Single-Sample Algorithm for the Prophet Setting}\label{sec:prophet-single-sample}
Suppose that, in the prophet setting of online pen testing, the distributions $\D_1, \D_2, \ldots, \D_n$ from which $(X_i)_{i=1}^n$ is drawn are unknown, and we only get to learn them by drawing a few samples from each $\D_i$. How many samples are sufficient for the player to be still $O(\log n)$-competitive?

Naturally, we might want to simulate the algorithm drawn from Section~\ref{sec:prophet-general} using samples from $\D_1, \D_2, \ldots, \D_n$. Recall that the algorithm only requires a few quantiles of the $n$ distributions, namely $\tau^{(i)}_{\theta}$ for all $i \in [n]$ and $\theta \in \{1/2, 1/4, \ldots, 1/n\}$. Calculating these quantiles (even approximately) from samples, however, requires $\Omega(n)$ samples from each $\D_i$.

Despite this, we show that a single sample from each $\D_i$ is sufficient, thus proving Theorem~\ref{thm:prophet-single-sample}. We also emphasize that unlike in Sections \ref{sec:prophet-iid}~and~\ref{sec:prophet-general}, the following proof allows non-continuous distributions, i.e., $\D_i$ may have point masses.

\begin{proof}[Proof of Theorem~\ref{thm:prophet-single-sample}]
    Let random variable $\Xmax$ denote the maximum of $X_1, \ldots, X_n$ when they are drawn independently from $\D_1, \ldots, \D_n$. The cumulative distribution function of $\Xmax$, $F(x) \coloneqq \pr{X \sim \D}{\Xmax \le x}$, is right-continuous. Thus, it is valid to define $A \ge 0$ as the smallest number such that $\pr{X \sim \D}{\Xmax \le A} = F(x) \ge 1/3$. Similarly, we can define $B \ge 0$ as the largest number such that $\pr{X \sim \D}{\Xmax \ge B} \ge 1/3$. We can verify that $A \le B$, and
    \[
        \pr{X \sim \D}{A \le \Xmax \le B}
    \ge 1 - \pr{X \sim \D}{\Xmax < A} - \pr{X \sim \D}{\Xmax > B}
    \ge 1 - \frac{1}{3} - \frac{1}{3}
    =   \frac{1}{3}.
    \]
    
    Let $\Xmaxhat$ denote the maximum among $\hat X_1, \hat X_2, \ldots, \hat X_n$, where $\hat X_i$ is the sample from $\D_i$ that is provided to the player. Clearly, $\Xmaxhat$ follows the same distribution as $\Xmax$. Thus, $\Xmaxhat \in [A, B]$ holds with probability at least $1/3$, and we condition on this event in the following.
    
    We will give two different algorithms that achieve an $\Omega\left(\frac{1}{\log n}\right)$ fraction of $\Xmaxhat$ and $\Ex{X \sim \D}{\Xmax} - \Xmaxhat$, respectively.
    
    \paragraph{The first algorithm.} Assuming that $\Xmaxhat \le B$, we have
    \[
        \pr{X \sim \D}{\Xmaxhat \le \Xmax}
    \ge \pr{X \sim \D}{B \le \Xmax}
    \ge 1/3.
    \]
    Then, assuming that $\Xmaxhat \le \Xmax$ holds, we may view $X_1, X_2, \ldots, X_n$ as an instance of the secretary setting under an arbitrary arrival order, and $\Xmaxhat$ can be viewed as a ``hint'' on the maximum value that satisfies the precondition of Lemma~\ref{lemma:secretary-arbitrary-approx-opt}. Thus, if we run the algorithm from the lemma with $\Xmaxhat$, we obtain an expected score of $\frac{1}{3}\cdot\frac{\Xmaxhat}{O(\log n)} = \frac{\Xmaxhat}{O(\log n)}$.

    \paragraph{The second algorithm.} The second algorithm is the single-threshold algorithm with $\theta = \Xmaxhat$, and its expected score is given by
    \[
        \sum_{i=1}^n\left(\prod_{j=1}^{i-1}\pr{X_j \sim \D_j}{X_j \le \Xmaxhat}\right)\cdot\pr{X_i \sim \D_i}{X_i > \Xmaxhat}\cdot\Ex{X_i \sim \D_i}{X_i - \Xmaxhat|X_i > \Xmaxhat}.
    \]
    The first factor, $\prod_{j=1}^{i-1}\pr{X_j \sim \D_j}{X_j \le \Xmaxhat}$, is at least 
    \[\pr{X \sim \D}{\Xmax \le \Xmaxhat} \ge \pr{X \sim \D}{\Xmax \le A} \ge 1/3.\]
    The product of the other two factors is equal to
    \[\Ex{X_i \sim \D_i}{(X_i - \Xmaxhat)\cdot\1{X_i > \Xmaxhat}} = \Ex{X_i \sim \D_i}{\max\{X_i - \Xmaxhat, 0\}}.
    \]
    Therefore, the expected score is lower bounded by
    \begin{align*}
        \frac{1}{3}\sum_{i=1}^n\Ex{X_i \sim \D_i}{\max\{X_i - \Xmaxhat, 0\}}
    &=  \frac{1}{3}\Ex{X_i \sim \D_i}{\sum_{i=1}^n\max\{X_i - \Xmaxhat, 0\}}\\
    &\ge\frac{1}{3}\Ex{X \sim \D}{\max_{i \in [n]}\max\{X_i - \Xmaxhat, 0\}}\\
    &=  \frac{1}{3}\Ex{X \sim \D}{\max\{\Xmax - \Xmaxhat, 0\}}
    \ge \frac{\Ex{X \sim \D}{\Xmax} - \Xmaxhat}{3}.
    \end{align*}
    
    Therefore, conditioning on each $\Xmaxhat \in [A, B]$, running one of the two algorithms uniformly at random gives an expected score of
    \[
        \frac{1}{2}\left[\frac{\Xmaxhat}{O(\log n)} + \frac{\Ex{X \sim \D}{\Xmax} - \Xmaxhat}{3}\right]
    =   \frac{\Ex{X \sim \D}{\Xmax}}{O(\log n)}.
    \]
    The overall expected score is then $\frac{1}{3} \cdot \frac{\Ex{X \sim \D}{\Xmax}}{O(\log n)}$, so the algorithm is $O(\log n)$-competitive.
\end{proof}

\bibliographystyle{alpha}
\bibliography{main}

\newcommand{\etalchar}[1]{$^{#1}$}
\begin{thebibliography}{AKKG21}

\bibitem[AAFL96]{AAFL96}
Baruch Awerbuch, Yossi Azar, Amos Fiat, and Tom Leighton.
\newblock Making commitments in the face of uncertainty: How to pick a winner
  almost every time.
\newblock In {\em Symposium on Theory of Computing (STOC)}, pages 519--530,
  1996.

\bibitem[AKKG21]{AKKG21}
Dorna Abdolazimi, Anna~R Karlin, Nathan Klein, and Shayan~Oveis Gharan.
\newblock Matroid partition property and the secretary problem.
\newblock {\em arXiv preprint arXiv:2111.12436}, 2021.

\bibitem[AKW14]{AKW14}
Pablo~D Azar, Robert Kleinberg, and S~Matthew Weinberg.
\newblock Prophet inequalities with limited information.
\newblock In {\em Symposium on Discrete Algorithms (SODA)}, pages 1358--1377,
  2014.

\bibitem[AL21]{AL21}
Susanne Albers and Leon Ladewig.
\newblock New results for the k-secretary problem.
\newblock {\em Theoretical Computer Science}, 863:102--119, 2021.

\bibitem[BDG{\etalchar{+}}09]{BDGI09}
Moshe Babaioff, Michael Dinitz, Anupam Gupta, Nicole Immorlica, and Kunal
  Talwar.
\newblock Secretary problems: weights and discounts.
\newblock In {\em Symposium on Discrete Algorithms (SODA)}, pages 1245--1254,
  2009.

\bibitem[BFLL20]{BFLL20}
Shant Boodaghians, Federico Fusco, Philip Lazos, and Stefano Leonardi.
\newblock Pandora's box problem with order constraints.
\newblock In {\em Conference on Economics and Computation (EC)}, pages
  439--458, 2020.

\bibitem[BG78]{BG78}
R~Bartoszy{\'n}ski and Z~Govindarajulu.
\newblock The secretary problem with interview cost.
\newblock {\em Sankhy{\=a}: The Indian Journal of Statistics, Series B}, pages
  11--28, 1978.

\bibitem[BIK07]{BIK07}
Moshe Babaioff, Nicole Immorlica, and Robert Kleinberg.
\newblock Matroids, secretary problems, and online mechanisms.
\newblock In {\em Symposium on Discrete Algorithms (SODA)}, pages 434--443,
  2007.

\bibitem[BIKK18]{BIKK18}
Moshe Babaioff, Nicole Immorlica, David Kempe, and Robert Kleinberg.
\newblock Matroid secretary problems.
\newblock {\em Journal of the ACM (JACM)}, 65(6):1--26, 2018.

\bibitem[CDFS19]{CDFS19}
Jos{\'e} Correa, Paul D{\"u}tting, Felix Fischer, and Kevin Schewior.
\newblock Prophet inequalities for iid random variables from an unknown
  distribution.
\newblock In {\em Conference on Economics and Computation (EC)}, pages 3--17,
  2019.

\bibitem[CFH{\etalchar{+}}17]{CFH+17}
Jos{\'e} Correa, Patricio Foncea, Ruben Hoeksma, Tim Oosterwijk, and Tjark
  Vredeveld.
\newblock Posted price mechanisms for a random stream of customers.
\newblock In {\em Proceedings of the 2017 ACM Conference on Economics and
  Computation}, pages 169--186, 2017.

\bibitem[CGT{\etalchar{+}}20]{CGTTZ20}
Shuchi Chawla, Evangelia Gergatsouli, Yifeng Teng, Christos Tzamos, and Ruimin
  Zhang.
\newblock Pandora's box with correlations: Learning and approximation.
\newblock In {\em Foundations of Computer Science (FOCS)}, pages 1214--1225.
  IEEE, 2020.

\bibitem[CL12]{CL12}
Sourav Chakraborty and Oded Lachish.
\newblock Improved competitive ratio for the matroid secretary problem.
\newblock In {\em Symposium on Discrete Algorithms (SODA)}, pages 1702--1712,
  2012.

\bibitem[DK14]{DK14}
Michael Dinitz and Guy Kortsarz.
\newblock Matroid secretary for regular and decomposable matroids.
\newblock {\em SIAM Journal on Computing}, 43(5):1807--1830, 2014.

\bibitem[Dru13]{Dru13}
Andrew Drucker.
\newblock High-confidence predictions under adversarial uncertainty.
\newblock {\em Transactions on Computation Theory (TOCT)}, 5(3):1--18, 2013.

\bibitem[Dyn63]{Dyn63}
Evgenii~Borisovich Dynkin.
\newblock The optimum choice of the instant for stopping a markov process.
\newblock {\em Soviet Mathematics}, 4:627--629, 1963.

\bibitem[FSZ14]{FSZ14}
Moran Feldman, Ola Svensson, and Rico Zenklusen.
\newblock A simple o (log log (rank))-competitive algorithm for the matroid
  secretary problem.
\newblock In {\em Symposium on Discrete Algorithms (SODA)}, pages 1189--1201,
  2014.

\bibitem[FTTZ16]{FTTZ16}
Dimitris Fotakis, Dimitris Tsipras, Christos Tzamos, and Emmanouil Zampetakis.
\newblock Efficient money burning in general domains.
\newblock {\em Theory of Computing Systems}, 59(4):619--640, 2016.

\bibitem[GJSS19]{GJSS19}
Anupam Gupta, Haotian Jiang, Ziv Scully, and Sahil Singla.
\newblock The markovian price of information.
\newblock In {\em Integer Programming and Combinatorial Optimization (IPCO)},
  pages 233--246, 2019.

\bibitem[Gup17]{Gup17}
Anupam Gupta.
\newblock Prophets and secretaries.
\newblock \url{https://www.cs.cmu.edu/~anupamg/ipco17/ipco-talk3.pdf}, 2017.
\newblock Accessed: 2022-08-28.

\bibitem[Har96]{Har96}
Friedrich Harten.
\newblock {\em Prophetenregionen bei zeitlichen Bewertungen im unabh{\"a}ngigen
  und im iid-Fall}.
\newblock PhD thesis, Gesellschaft zur F{\"o}rderung der Mathematischen
  Statistik, 1996.

\bibitem[HN20]{HN20}
Tony Huynh and Peter Nelson.
\newblock The matroid secretary problem for minor-closed classes and random
  matroids.
\newblock {\em SIAM Journal on Discrete Mathematics}, 34(1):163--176, 2020.

\bibitem[HR08]{HR08}
Jason~D Hartline and Tim Roughgarden.
\newblock Optimal mechanism design and money burning.
\newblock In {\em Symposium on Theory of Computing (STOC)}, pages 75--84, 2008.

\bibitem[Jon90]{Jon90}
Martin Jones.
\newblock Prophet inequalities for cost of observation stopping problems.
\newblock {\em Journal of Multivariate Analysis}, 34(2):238--253, 1990.

\bibitem[JSZ13]{JSZ13}
Patrick Jaillet, Jos{\'e}~A Soto, and Rico Zenklusen.
\newblock Advances on matroid secretary problems: Free order model and laminar
  case.
\newblock In {\em Integer Programming and Combinatorial Optimization (IPCO)},
  pages 254--265, 2013.

\bibitem[Kle05]{Kle05}
Robert Kleinberg.
\newblock A multiple-choice secretary algorithm with applications to online
  auctions.
\newblock In {\em Symposium on Discrete Algorithms (SODA)}, pages 630--631,
  2005.

\bibitem[K{\"o}s04]{Kos04}
Holger K{\"o}sters.
\newblock Difference prophet inequalities for [0, 1]-valued iid random
  variables with cost for observations.
\newblock {\em The Annals of Probability}, 32(4):3324--3332, 2004.

\bibitem[KS78]{KS78}
Ulrich Krengel and Louis Sucheston.
\newblock On semiamarts, amarts, and processes with finite value.
\newblock {\em Probability on Banach spaces}, 4:197--266, 1978.

\bibitem[Lac14]{Lac14}
Oded Lachish.
\newblock O (log log rank) competitive ratio for the matroid secretary problem.
\newblock In {\em Foundations of Computer Science (FOCS)}, pages 326--335,
  2014.

\bibitem[MTW16]{MTW16}
Tengyu Ma, Bo~Tang, and Yajun Wang.
\newblock The simulated greedy algorithm for several submodular matroid
  secretary problems.
\newblock {\em Theory of Computing Systems}, 58(4):681--706, 2016.

\bibitem[RS17]{RS17}
Aviad Rubinstein and Sahil Singla.
\newblock Combinatorial prophet inequalities.
\newblock In {\em Symposium on Discrete Algorithms (SODA)}, pages 1671--1687,
  2017.

\bibitem[Rub16]{Rub16}
Aviad Rubinstein.
\newblock Beyond matroids: Secretary problem and prophet inequality with
  general constraints.
\newblock In {\em Symposium on Theory of Computing (STOC)}, pages 324--332,
  2016.

\bibitem[RWW20]{RWW20}
Aviad Rubinstein, Jack~Z Wang, and S~Matthew Weinberg.
\newblock Optimal single-choice prophet inequalities from samples.
\newblock In {\em Innovations in Theoretical Computer Science (ITCS)}, 2020.

\bibitem[SC92]{SC92}
Ester Samuel-Cahn.
\newblock A difference prophet inequality for bounded iid variables, with cost
  for observations.
\newblock {\em The Annals of Probability}, pages 1222--1228, 1992.

\bibitem[Sin18]{Sin18}
Sahil Singla.
\newblock The price of information in combinatorial optimization.
\newblock In {\em Symposium on Discrete Algorithms (SODA)}, pages 2523--2532,
  2018.

\bibitem[Sot13]{Sot13}
Jos{\'e}~A Soto.
\newblock Matroid secretary problem in the random-assignment model.
\newblock {\em SIAM Journal on Computing}, 42(1):178--211, 2013.

\bibitem[STV21]{STV21}
Jos{\'e}~A Soto, Abner Turkieltaub, and Victor Verdugo.
\newblock Strong algorithms for the ordinal matroid secretary problem.
\newblock {\em Mathematics of Operations Research}, 46(2):642--673, 2021.

\bibitem[TM20]{Tullis20}
Ian Tullis and Petr Mitrichev.
\newblock Pen testing.
\newblock
  \url{https://codingcompetitions.withgoogle.com/codejam/round/000000000019ff7e/0000000000377630},
  2020.
\newblock Accessed: 2022-08-28.

\bibitem[Wei79]{Wei79}
Martin~L Weitzman.
\newblock Optimal search for the best alternative.
\newblock {\em Econometrica: Journal of the Econometric Society}, pages
  641--654, 1979.

\end{thebibliography}

\appendix

\section{Proof of Fact~\ref{fact:lower}}\label{sec:omitted-prophet-iid}
We finish the proof of Fact~\ref{fact:lower} by calculating the expected maximum of $X_1, X_2, \ldots, X_n$, which are drawn i.i.d.\ from the exponential distribution $\D$. Define $\Xmax \coloneqq \max_{i \in [n]}X_i$. We have
\[
    \Ex{X_1, \ldots, X_n\sim \D}{\Xmax}
=   \int_{0}^{+\infty}\pr{X_1, \ldots, X_n\sim \D}{\Xmax \ge u}~\rmd u\\
=  \int_{0}^{+\infty}[1 - (1 - e^{-u})^n]~\rmd u.
\]
Expanding $(1-e^{-u})^n$ and interchanging the summation and integration gives
\[
    \int_{0}^{+\infty}\left[\sum_{k=1}^n\binom{n}{k}(-1)^{k+1}e^{-ku}\right]~\rmd u
=   \sum_{k=1}^n\binom{n}{k}(-1)^{k+1}\int_{0}^{+\infty}e^{-ku}~\rmd u
=   \sum_{k=1}^n\frac{(-1)^{k+1}}{k}\binom{n}{k}.
\]
Plugging the identity $\binom{n}{k} = \sum_{j=0}^{n-1}\binom{j}{k-1}$ into the above gives
\[
    \sum_{k=1}^n\sum_{j=0}^{n-1}\frac{(-1)^{k+1}}{k}\binom{j}{k-1}
=   \sum_{j=0}^{n-1}\frac{1}{j+1}\sum_{k=1}^n(-1)^{k+1}\binom{j+1}{k}
=  \sum_{j=0}^{n-1}\frac{1}{j+1}
=   H_n,
\]
where the first step applies $\binom{j+1}{k} = \frac{j+1}{k}\binom{j}{k-1}$, and the second step follows from
\[
    \sum_{k=1}^n(-1)^{k+1}\binom{j+1}{k}
=   -\sum_{k=1}^{j+1}(-1)^{k}\binom{j+1}{k}
=   (-1)^0\binom{j+1}{0} - [1 + (-1)]^{j+1}
=   1.
\]
This proves $\Ex{X_1, \ldots, X_n\sim \D}{\Xmax} = H_n$.

\section{Proof of Lemma~\ref{lemma:secretary-random-approx-opt}}\label{sec:omitted-secretary-upper}
In the following, we restate and prove Lemma~\ref{lemma:secretary-random-approx-opt}.

\vspace{6pt}

\noindent\textbf{Lemma~\ref{lemma:secretary-random-approx-opt}}~\textit{
In the secretary setting under random order, there is an algorithm that, given any $\Xmaxhat$ that lies in $[0, a_{[1]}]$, achieves a score of at least $\frac{\Xmaxhat}{O(\log n)}$ in expectation.
}

\begin{proof}
When $\Xmaxhat = 0$, there is nothing to prove, so we assume $\Xmaxhat > 0$ in the following. Again, we set $k = \lfloor\log_2n\rfloor + 2 = O(\log n)$ and pick $j$ from $[k - 1]$ uniformly at random. The only difference is that we run the single-threshold algorithm at threshold $\theta = \frac{j-1}{k} \cdot \Xmaxhat$, as we do not know $a_{[1]}$.

For each $j \in [k]$, we define $n_{\ge j} \coloneqq |\{i \in [n]: a_i > \frac{j-1}{k}\Xmaxhat\}|$. Since $a_{[1]} \ge \Xmaxhat > \frac{k-1}{k}\Xmaxhat$, we have $n_{\ge k} \ge 1$. Furthermore, we clearly have $n_{\ge 1} \le n$. Conditioning on the choice of $j$, exactly $n_{\ge j}$ options could pass the test at $\theta = \frac{j-1}{k}\cdot\Xmaxhat$, and the option that we accept is uniformly distributed among them. So, we achieve a score of $\ge \frac{\Xmaxhat}{k}$ with probability $\ge \frac{n_{\ge j+1}}{n_{\ge j}}$. Finally, averaging over the choice of $j \in [k - 1]$ gives an expected score of at least
\[
    \frac{\Xmaxhat}{k}\cdot\frac{1}{k-1}\sum_{j=1}^{k-1}\frac{n_{\ge j+1}}{n_{\ge j}}
\ge \frac{\Xmaxhat}{k}\cdot\left(\frac{n_{\ge k}}{n_{\ge 1}}\right)^{\frac{1}{k-1}}
\ge \frac{a_{[1]}}{k}\cdot n^{-\frac{1}{k-1}}
=   \frac{\Xmaxhat}{O(\log n)}.
\]
\end{proof}

\section{A Slightly Improved Bound for IID Prophet Setting}\label{sec:constants}
We refine the algorithm for the i.i.d.\ prophet setting in Section~\ref{sec:prophet-iid} to give an $(e + o(1))\ln n$-competitive algorithm, as we claimed in Section~\ref{sec:discussion}.

\begin{proof}[Proof of Theorem~\ref{thm:prophet} (i.i.d.\ case with a better constant)]
We will follow the same approach as in the proof from Section~\ref{sec:prophet-iid}: Pick integer $k \ge 1$ and $1 = \alpha_0 > \alpha_1 > \alpha_2 > \cdots > \alpha_k > 0$. We will draw $\theta$ randomly from some distribution over $\{\tau_{\alpha_0}, \tau_{\alpha_1}, \ldots, \tau_{\alpha_k}\}$, and run the single-threshold algorithm at $\theta$.

We first lower bound the expected score of the algorithm conditioning on using each threshold $\tau_{\alpha_j}$. For $j < k$, using threshold $\tau_{\alpha_j}$ gives an expected score of
\begin{align*}
    [1 - (1 - \alpha_j)^n]\cdot\Ex{X \sim \D}{X - \tau_{\alpha_j}|X > \tau_{\alpha_j}}
&\ge[1 - (1 - \alpha_j)^n]\cdot\pr{X \sim \D}{X > \tau_{\alpha_{j+1}} | X > \tau_{\alpha_j}}\cdot (\tau_{\alpha_{j+1}} - \tau_{\alpha_j})\\
&=  \frac{\alpha_{j+1}}{\alpha_j}[1 - (1 - \alpha_j)^n]\cdot(\tau_{\alpha_{j+1}} - \tau_{\alpha_j})\\
&\ge \frac{\alpha_{j+1}}{\alpha_j}[1 - (1 - \alpha_k)^n]\cdot(\tau_{\alpha_{j+1}} - \tau_{\alpha_j}),
\end{align*}
while threshold $\tau_{\alpha_k}$ gives
\[
[1 - (1 - \alpha_k)^n]\cdot\Ex{X \sim \D}{X - \tau_{\alpha_k}|X > \tau_{\alpha_k}}
=   \frac{1}{\alpha_k}[1 - (1 - \alpha_k)^n]\cdot\Ex{X\sim\D}{\max\{X - \tau_{\alpha_k}, 0\}}.
\]
On the other hand, Lemma~\ref{lemma:prophet-optimum} upper bounds $\Ex{X_1, X_2, \ldots, X_n \sim \D}{\max\{X_1, X_2, \ldots, X_n\}}$ by
\[
    \tau_{\alpha_k} + n\cdot\Ex{X \sim \D}{\max\{X - \tau_{\alpha_k}, 0\}}
=   \sum_{j=0}^{k-1}(\tau_{\alpha_{j+1}} - \tau_{\alpha_j}) + n\cdot\Ex{X \sim \D}{\max\{X - \tau_{\alpha_k}, 0\}}.
\]

Define $C_j \coloneqq [1 - (1 - \alpha_k)^n]\cdot\frac{\alpha_{j+1}}{\alpha_j}$ for $j < k$ and $C_k \coloneqq [1 - (1 - \alpha_k)^n]\cdot\frac{1}{n\alpha_k}$. The three equations above together give a $\gamma$-competitive algorithm, where $\gamma \coloneqq \sum_{j=0}^{k}\frac{1}{C_k}$: If we set $\theta$ to $\tau_{\alpha_j}$ with probability $\frac{1/C_j}{\gamma}$, our expected score is lower bounded by
\begin{align*}
    &~\sum_{j=0}^{k-1}\frac{1/C_j}{\gamma}\cdot C_j(\tau_{\alpha_{j+1}} - \tau_{\alpha_j}) + \frac{1/C_k}{\gamma}\cdot nC_k\cdot\Ex{X \sim \D}{\max\{X - \tau_{\alpha_k}, 0\}}\\
=   &~\frac{1}{\gamma}\left[\sum_{j=0}^{k-1}(\tau_{\alpha_{j+1}} - \tau_{\alpha_j}) + n\cdot\Ex{X \sim \D}{\max\{X - \tau_{\alpha_k}, 0\}}\right],
\end{align*}
which is at least a $1/\gamma$ fraction of the expected maximum. Thus, it remains to pick $k$ and $\alpha_0, \alpha_1, \ldots, \alpha_k$ to minimize the competitive ratio $\gamma = \frac{1}{1 - (1 - \alpha_k)^n}\left(\frac{\alpha_0}{\alpha_1} + \frac{\alpha_1}{\alpha_2} + \cdots + \frac{\alpha_{k-1}}{\alpha_k} + n\alpha_k\right)$.

If we fix $\alpha_k = x \in (0, 1)$, the optimal choice of $\alpha_0$ through $\alpha_{k-1}$ are $\alpha_j = x^{j/k}$, and the resulting $\gamma$ can be written as
\[
    \gamma = \frac{kx^{-1/k} + nx}{1 - (1 - x)^n}
\le \frac{kx^{-1/k} + nx}{1 - e^{-nx}}.
\]
If we further choose $x = \frac{\sqrt{\ln n}}{n}$ and $k = \left\lceil\ln\frac{1}{x}\right\rceil \le \ln\frac{1}{x} + 1$, we have
\[
    \gamma
\le \frac{e(\ln\frac{1}{x}+1) + nx}{1 - e^{-nx}}
\le \frac{e\ln n + e + \sqrt{\ln n}}{1 - e^{-\sqrt{\ln n}}}
=   \frac{(1 + o(1))e\ln n}{1 - o(1)}
=   (e + o(1))\ln n.
\]
\end{proof}

\end{document}